\renewcommand\subsubsection{\@startsection{subsubsection}{3}{\z@}%
  {-3.25ex\@plus -1ex \@minus -.2ex}%
  {0pt}%
  {\normalfont\normalsize\bfseries}}
\newtheoremstyle{spaced}
  {1ex}   
  {1ex}   
  {\normalfont} 
  {}         
  {\bfseries} 
  {.}        
  { }       
  {}         
\theoremstyle{spaced}
\newtheorem{thm}{Theorem}[section]
\newtheorem*{thm*}{Theorem}
\newtheorem*{rmk*}{Remark}
\newtheorem{lemma}[thm]{Lemma}
\newtheorem{prop}[thm]{Proposition}
\newtheorem{defn}[thm]{Definition}
\newtheorem{ex}[thm]{Example}
\newtheorem{rmk}[thm]{Remark}
\newcommand{\del}{\partial}
\newcommand*\circled[1]{\tikz[baseline=(char.base)]{
            \node[shape=circle,draw,inner sep=2pt] (char) {#1};}}
\newcommand{\abracket}[1]{\left\langle#1\right\rangle}
\newcommand{\bracket}[1]{\left(#1\right)}
\newcommand{\bb}[1]{\mathbb{#1}}
\newcommand{\mr}[1]{\mathrm{#1}}
\newcommand{\mc}[1]{\mathcal{#1}}
\newcommand{\mf}[1]{\mathfrak{#1}}
\newcommand{\ol}{\overline}
\newcommand{\ul}{\underline}
\newcommand{\eps}{\varepsilon}
\newcommand{\R}{\bb{R}}
\newcommand{\C}{\bb{C}}
\newcommand{\Z}{\bb{Z}}
\DeclareMathOperator{\umap}{\underline{\mr{Map}}}
\DeclareMathOperator{\uhom}{\underline{\mr{Hom}}}
\DeclareMathOperator{\HH}{HH}
\DeclareMathOperator{\Loc}{Loc}
\DeclareMathOperator{\IIB}{IIB}
\DeclareMathOperator{\IIA}{IIA}
\DeclareMathOperator{\coh}{{Coh}}
\DeclareMathOperator{\Hom}{Hom}
\DeclareMathOperator{\Ext}{Ext}
\DeclareMathOperator{\ext}{Ext}
\DeclareMathOperator{\higgs}{Higgs}
\DeclareMathOperator{\tot}{tot}
\DeclareMathOperator{\id}{id}
\DeclareMathOperator{\tr}{Tr}
\DeclareMathOperator{\pt}{pt}
\DeclareMathOperator{\PV}{PV}
\DeclareMathOperator{\GL}{GL}
\DeclareMathOperator{\U}{U}
\begin{document}

\title{Twists of Supersymmetric Yang--Mills Theory\\ in Topological String Theory}

\author{Philsang Yoo}

\date{}

\maketitle

\abstract{We show that every twist of pure supersymmetric Yang--Mills theory with gauge group $\GL(N)$ can be realized as an open-string field theory in topological string theory. Our approach reinterprets twists of supersymmetric Yang--Mills theory as generalized Chern--Simons theories, and identifies topological string backgrounds that produce the corresponding Chern--Simons theories.}


\section{Introduction}

D-branes are central objects in string theory, commonly described as extended objects on which open strings can end. At low energies, the dynamics on their world-volume are effectively captured by gauge theories arising from the massless sector of open-string field theory, with features determined by both the brane configuration and the surrounding string background. In the simplest cases, a stack of $N$ D$k$-branes on a flat space in type II string theories yields $(k+1)$-dimensional maximally supersymmetric Yang--Mills theory with gauge group $\U(N)$; more intricate brane configurations and nontrivial backgrounds, however, lead to various supersymmetric Yang--Mills theories in different dimensions. These brane realizations have provided profound insights into supersymmetric gauge theory, including connections between the planar limits of gauge theories and tree-level string theories, as well as dualities between different gauge theories.

A complementary approach to understanding the structure of supersymmetric field theories, including but not limited to gauge theories, is twisting. Twisting involves selecting invariants under a chosen supercharge, resulting in simplified field theories that depend only on the topological or holomorphic structure of spacetime. This idea has been pivotal in physical mathematics, notably in mirror symmetry and the formulation of Seiberg--Witten invariants\footnote{In fact, twisting of 4-dimensional $\mathcal{N}=2$ theories provides a physical context for the Donaldson invariants, and a weakly coupled abelianized description of the theory in the IR yields the Seiberg--Witten invariants.} for 4-manifold theory. Recently, twists of supersymmetric Yang--Mills theories have been systematically studied and completely classified in \cite{ESW21, ESW}.

These two distinct perspectives converge on a natural question: can every twist of a supersymmetric Yang--Mills theory be realized within string theory? A hint comes from \cite{WittenCSstring}, which argues that (topological) Chern--Simons theory can be realized as an open-string field theory of topological strings. Indeed, in the context of topological strings, we answer the question as follows:

\begin{thm*}
Every twist of pure supersymmetric Yang--Mills theory with gauge group $\GL(N)$ can be realized as an open-string field theory of topological strings, as summarized in the following tables.
\end{thm*}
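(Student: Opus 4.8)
The plan is to prove the theorem in two movements, exactly as announced in the abstract: first recast every twist as a generalized Chern--Simons theory, and then realize each such Chern--Simons theory as the open-string sector of a topological string background. At bottom the statement is a finite case analysis indexed by the dimension $d\in\{3,4,6,10\}$ in which the pure gauge multiplet exists, the amount of supersymmetry produced by dimensional reduction, and the choice of twisting supercharge; the two tables record the output of this analysis. First I would recall from \cite{ESW21, ESW} the classification of twists of pure $\GL(N)$ super-Yang--Mills. Each such theory carries a poset of twists interpolating between the holomorphic twist and the fully topological twist, with holomorphic-topological twists in between, and in each case the twisted theory is presented as a BV theory with explicit field content and action.

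The conceptual anchor of the reinterpretation step is the holomorphic twist of ten-dimensional $\mathcal{N}=1$ super-Yang--Mills, whose twisted action is holomorphic Chern--Simons theory on the Calabi--Yau fivefold $\C^5$,
\[
S(A) \;=\; \int_{\C^5} \Omega \wedge \tr\!\left( \tfrac{1}{2}\, A\, \ol{\del} A + \tfrac{1}{3}\, A^3 \right),
\]
with $\Omega$ the holomorphic volume form and $A$ a $\GL(N)$-valued partial connection. Dimensional reduction then produces the remaining twists as generalized Chern--Simons theories on $\C^{n}\times\R^{k}$ equipped with the mixed Dolbeault--de Rham differential $\ol{\del}+d_{\mr{dR}}$: the holomorphic directions contribute a $\ol{\del}$-operator paired against the holomorphic volume form, the topological directions contribute $d_{\mr{dR}}$, and in every case the interaction is the universal cubic term $\tr(A^3)$ of Chern--Simons type. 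Carrying this out requires checking, dimension by dimension, that the reduction of the ten-dimensional action is compatible with the twisting data listed in the tables.

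Third, I would realize each generalized Chern--Simons theory as an open-string field theory, building on \cite{WittenCSstring}. The three building blocks are: (a) Witten's identification of ordinary Chern--Simons theory on $M$ with the open-string field theory of the A-model on $T^*M$ with a Lagrangian brane on $M$; (b) the B-model analog, in which holomorphic Chern--Simons theory on a Calabi--Yau $X$ is the open-string field theory of the B-model on $X$ with a space-filling brane; and (c) for the mixed holomorphic-topological cases, topological string backgrounds of hybrid type whose boundary sector reproduces the required $\ol{\del}+d_{\mr{dR}}$ Chern--Simons theory. In each case a stack of $N$ branes supplies a complexified Chan--Paton factor, so that the open-string gauge group is exactly $\GL(N)$, matching the field-theory side and explaining why the natural group here is $\GL(N)$ rather than a compact form.

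I expect the main obstacle to lie in the third step for the intermediate and topological twists. Whereas the holomorphic and fully topological endpoints have clean A- and B-model realizations, the mixed cases require exhibiting an explicit topological string background --- not merely its genus-zero open-string field theory --- whose boundary sector is the generalized Chern--Simons theory \emph{with its full cubic interaction}, rather than only the free part. Verifying that the interacting open-string field theory matches the twisted BV action on the nose, and that no additional boundary couplings are generated along the way, is the delicate point; once this is secured, the remainder of the argument is the dimension-by-dimension bookkeeping summarized in the two tables.
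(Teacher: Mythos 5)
Your two-movement skeleton (twists $\to$ generalized Chern--Simons theories $\to$ open-string field theories) matches the paper's announced strategy, but the way you execute each movement has a gap that would make the argument fail outside the maximally supersymmetric family. Your reinterpretation step derives everything by dimensional reduction of the holomorphic twist of 10-dimensional $\mc N=1$ super Yang--Mills. Dimensional reduction preserves the number of supercharges (16), so this can only reach the theories in the first table. The theorem, however, also covers \emph{pure} super Yang--Mills with 8 supercharges (6d $(1,0)$, 5d $\mc N=1$, 4d $\mc N=2$, 3d $\mc N=4$), with 4 supercharges (4d $\mc N=1$, 3d $\mc N=2$), and the two-dimensional theories --- none of which is a reduction of the 10-dimensional theory. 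The paper does not reduce from 10 dimensions at all: it takes the ESW classification, in which each twist is presented as a mapping stack, a shifted cotangent stack $T^*[-1]\umap(-,B\mf g)$, or a stack with target $0\sslash\mf g$, $\mf g/\mf g$, or $T[1]B\mf g$, and converts each presentation by derived-geometric isomorphisms (nine types) into the uniform form $\umap(-,B(A\otimes\mf g))$. The nilpotent algebras $A$ such as $\C[\eps]$, $\C[\delta]/(\delta^2)$, $\C[\eps,\delta]/(\delta^2)$ carry precisely the data that a reduction from 10d cannot produce.

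The same gap propagates to your string-theory step. In the paper, the reduced-supersymmetry theories are engineered by D-branes wrapping compact cycles ($\bb P^1$, $\bb P^2$, $\bb P^1\times\bb P^1$, $\bb P^3$) inside local Calabi--Yau backgrounds such as $\tot_{\bb P^1}(\mc O(-2))$ or $\tot_{\bb P^2}(\mc O(-3))$, followed by compactification along the compact cycle; the algebras $A$ arise as the cyclic Ext-algebras $\ext_{\coh(X)}(\mc O_Z,\mc O_Z)$ computed via the Katz--Sharpe formula (Lemma \ref{lemma:CYext}). Moreover, the impure/generic twists are obtained by turning on \emph{closed-string} fields --- linear superpotentials $w\in\PV$ --- which deform the open-string differential by $\frac{\del}{\del\eps}$ (Remark \ref{rmk:linear superpotential}); your proposal has no counterpart for either mechanism. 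Finally, the difficulty you flag as the delicate point --- matching the full interacting open-string field theory against the BV action and excluding extra boundary couplings --- is dissolved, not solved, in the framework the paper adopts: following Costello--Li, a topological string background \emph{is} a Calabi--Yau category of dimension 5 of the form $\mr{Fuk}(\R^{2m})\otimes\coh(X)$, and the open-string field theory of a brane \emph{is by definition} the cyclic $L_\infty$-structure on its self-Ext sheaf, so the identification reduces to a computation of cyclic algebras rather than a matching problem. Your alternative appeal to the A-model on $T^*M$ for general $M$ also runs into the instanton obstruction that the paper explicitly avoids by working exclusively with $\mr{Fuk}(\R^{2m})$.
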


\begin{rmk*}
Costello and Li \cite{CL16} have extended the idea of twisting to string theory and supergravity, and conjectured that certain twists of type II string theories are, in fact, topological string theories (see Remark \ref{rmk:CLconjecture} and the discussion above). We view our main theorem as supporting evidence for this conjecture, since the twisted versions of supersymmetric Yang--Mills theories naturally emerge as world-volume gauge theories in these topological strings --- mirroring the way physical supersymmetric Yang--Mills theories arise in physical string theory.
\end{rmk*}

\begin{table}[htbp]
\centering
\renewcommand{\arraystretch}{0.8}
\begin{tabular}{c|c|c||c|c||c|c}
{\(d\)} & \( {\mathcal{N}}\)    &  {twist}&  {spacetime}& \( {A}\)&  {D-branes}&  {closed string} \\
\hline
\multirow{2}{*}{10} 
 & \multirow{2}{*}{$(1,0)$}       
 & \multirow{2}{*}{$(1,0)$} 
 & \multirow{2}{*}{$  \C^5_{\ol{\del}}$} 
 & \multirow{2}{*}{$\C$} 
 & D9-branes on $\C^5$ 
 & \\
&&&& & in IIB[$\C^5_B$] & \\
\hline 
\multirow{2}{*}{9}  
 & \multirow{2}{*}{1}           
 & \multirow{2}{*}{1}                
 & \multirow{2}{*}{$\R_{\mr{dR}} \times \C^4_{\ol{\del}}$} 
 & \multirow{2}{*}{$\C$} 
 & D8-branes on $\R\times \C^4$
 & \\
&&&&& in IIA[$\R^2_A\times \C^4_B$] & \\
\hline
\multirow{6}{*}{8}  
 & \multirow{6}{*}{1}     
 & $(1,0)$  
 & \multirow{2}{*}{$ \C^4_{\ol{\del}}$}  
 & \multirow{2}{*}{$\C[\eps]$} 
 & D7-branes on $\C^4$  
 & \\
&&  pure &&& in IIB[$\C^5_B$] & \\
\cline{3-7}
 & &  \multirow{2}{*}{$(1,1)$}     
 & \multirow{2}{*}{$\R^2_{\mr{dR}} \times \C^3_{\ol{\del}}$}  
 & \multirow{2}{*}{$\C$}   
 & D7-branes on $\R^2 \times \C^3$
 & \\
&&&&&  in IIB[$\R^4_A \times \C^3_B$] & \\
\cline{3-7}
 & &  $(1,0)$  
 & \multirow{2}{*}{$ \C^4_{\ol{\del}}$}  
 & \multirow{2}{*}{$\left (\C[\eps], \frac{\del}{\del \eps } \right) $} 
 & D7-branes on $\C^4$  
 &  $\C^4_{z_i}\subset \C^5_{z_i,w}$ \\
&& impure &&& in IIB[$\C^5_B$] & $w\in \PV(\C^5)$ \\
\hline
\multirow{6}{*}{7} 
 & \multirow{6}{*}{1} 
 & 1 
 & \multirow{2}{*}{$\R_{\mr{dR}} \times \C^3_{\ol{\del}}$} 
 & \multirow{2}{*}{$\C[\eps]$}  
 & D6-branes on $\R \times \C^3$
 & \\
&& pure &&& in IIA[$\R^2_A\times \C^4_B$] & \\
\cline{3-7}
&& \multirow{2}{*}{2} 
 & \multirow{2}{*}{$\R^3_{\mr{dR}} \times \C^2_{\ol{\del}}$}  
 & \multirow{2}{*}{$\C$}  
 & D6-branes on $\R^3\times \C^2$
 & \\
&&&& & in IIA[$\R^6_A\times \C^2_B$] & \\
\cline{3-7}
&& 1
 & \multirow{2}{*}{$\R_{\mr{dR}} \times \C^3_{\ol{\del}}$} 
 & \multirow{2}{*}{$\left (\C[\eps], \frac{\del}{\del \eps } \right)$}  
 & D6-branes on $\R\times \C^3$
 & $\C^3_{z_i}\subset \C^4_{z_i,w}$  \\
&& impure &&& in IIA[$\R^2_A\times \C^4_B$] &  $w\in \PV(\C^4)$   \\
\hline
\multirow{8}{*}{6} 
 & \multirow{8}{*}{$(1,1)$}    
 & \multirow{2}{*}{$(1,0)$}  
 & \multirow{2}{*}{$ \C^3_{\ol{\del}}$} 
 & \multirow{2}{*}{$\C[\eps_1,\eps_2]$}  
 & D5-branes on $\C^3$  
 & \\
 & & & & & in IIB[$\C^5_B$] & \\
\cline{3-7}
 & &   $(1,1)$  
 & \multirow{2}{*}{$\R^2_{\mr{dR}} \times \C^2_{\ol{\del}}$}  
 & \multirow{2}{*}{$\C[\eps]$}  
 & D5-branes on $\R^2\times \C^2$
 & \\
&& special &&& in IIB[$\R^4_A\times\C^3_B$] & \\
\cline{3-7}
&& \multirow{2}{*}{$(2,2)$}
 & \multirow{2}{*}{$\R^4_{\mr{dR}} \times \C_{\ol{\del}}$}  
 & \multirow{2}{*}{$\C$}  
 & D5-branes on $\R^4\times \C$
 & \\
&&&&& in IIB[$\R^8_A\times \C_B$] & \\
\cline{3-7}
&& $(1,1)$
 & \multirow{2}{*}{$\R^2_{\mr{dR}} \times \C^2_{\ol{\del}}$}  
 & \multirow{2}{*}{$\left (\C[\eps], \frac{\del}{\del \eps }  \right)$}  
 & D5-branes on $\R^2\times \C^2$
 &  $\C^2_{z_i}\subset \C^3_{z_i,w}$  \\
&& generic &&& in IIB[$\R^4_A\times\C^3_B$] & $w\in \PV(\C^3)$ \\
\hline
\multirow{8}{*}{5} 
 & \multirow{8}{*}{2} 
 &  \multirow{2}{*}{1} 
 & \multirow{2}{*}{$\R_{\mr{dR}} \times \C^2_{\ol{\del}}$}  
 & \multirow{2}{*}{$\C[\eps_1,\eps_2]$} 
 & D4-branes on $\R \times \C^2$
 & \\
&&&&& in IIA[$\R^2_A\times \C^4_B$] & \\
\cline{3-7}
&& 2
 & \multirow{2}{*}{$\R^3_{\mr{dR}} \times \C_{\ol{\del}}$} 
 & \multirow{2}{*}{$\C[\eps]$} 
 & D4-branes on $\R^3\times \C$
 & \\
&& special &&& in IIA[$\R^6_A\times \C^2_B$] & \\
\cline{3-7}
&& \multirow{2}{*}{4}
 & \multirow{2}{*}{$\R^5_{\mr{dR}}$}  
 & \multirow{2}{*}{$\C$}  
 & D4-branes on $\R^5$
 & \\
&&&&& in IIA[$\R^{10}_A$] & \\
\cline{3-7}
&& 2
 & \multirow{2}{*}{$\R^3_{\mr{dR}} \times \C_{\ol{\del}}$} 
 & \multirow{2}{*}{$\left (\C[\eps], \frac{\del}{\del \eps }\right )$} 
 & D4-branes on $\R^3\times \C$
 & $\C_{z}\subset \C^2_{z,w}$ \\
&& generic &&& in IIA[$\R^6_A\times \C^2_B$] & $w\in \PV(\C^2)$  \\
\hline
\multirow{12}{*}{4}
 & \multirow{12}{*}{4} 
 &  \multirow{2}{*}{$(1,0)$} 
 & \multirow{2}{*}{$\C^2_{\mr{Dol}}$} 
 & \multirow{2}{*}{$\C[\eps]$}   
 & D3-branes on $\C^2$
 & \\
&&&&& in IIB[$\C^5_B$] \\
\cline{3-7}
&& \multirow{2}{*}{$(1,1)$}
 & \multirow{2}{*}{$\R^2_{\mr{dR}} \times \C_{\mr{Dol}}$} 
 & \multirow{2}{*}{$\C[\eps]$}   
 & D3-branes on $\R^2 \times \C$
 & \\
&&&&& in IIB[$\R^4_A\times\C^3_B$] \\
\cline{3-7}
&& $(2,2)$
 & \multirow{2}{*}{$\R^4_{\mr{dR}}$}  
 & \multirow{2}{*}{$\C[\eps]$} 
 & D3-branes on $\R^4$
 & \\
&& special &&& in IIB[$\R^8_A\times\C_B$] \\
\cline{3-7}
&&\multirow{2}{*}{$(2,0)$}
 & \multirow{2}{*}{$\C^2_{\mr{Dol}}$}  
 & \multirow{2}{*}{$\left (\C[\eps], \frac{\del}{\del \eps }\right )$}  
 & D3-branes on $\C^2$
 &  $\C^2_{z_i}\subset \C^5_{z_i,w_j,w}$ \\
&&&&& in IIB[$\C^5_B$] & $w \in \PV(\C^5)$ \\
\cline{3-7}
&&\multirow{2}{*}{$(2,1)$}
 &  \multirow{2}{*}{$\R^2_{\mr{dR}} \times \C_{\mr{Dol}}$} 
 & \multirow{2}{*}{$\left (\C[\eps], \frac{\del}{\del \eps }\right )$} 
 & D3-branes on $\R^2 \times \C$
 & $\C_{z_1}\subset \C^3_{z_1,w_1,w}$ \\
&&&&& in IIB[$\R^4_A\times\C^3_B$] &  $w \in \PV(\C^3)$  \\
\cline{3-7}
&& $(2,2)$
 & \multirow{2}{*}{$\R^4_{\mr{dR}}$} 
 & \multirow{2}{*}{$\left (\C[\eps], \frac{\del}{\del \eps }\right )$}
 & D3-branes on $\R^4$
 & $\{0\}\subset \C_w$  \\
&& generic &&& in IIB[$\R^8_A\times\C_B$]& $w\in \PV(\C)$ \\
\hline
\multirow{6}{*}{3}  
 & \multirow{6}{*}{8}
 & \multirow{2}{*}{1} 
 & \multirow{2}{*}{$\R_{\mr{dR}} \times \C_{\mr{Dol}}$}
 & \multirow{2}{*}{$\C[\eps_1,\eps_2]$} 
 & D2-branes on $\R \times \C$
 & \\
&&&&& in IIA[$\R^2_A\times\C^4_B$] \\
\cline{3-7}
&&\multirow{2}{*}{2 (B)}
 & \multirow{2}{*}{$\R^3_{\mr{dR}}$} 
 & \multirow{2}{*}{$\C[\eps_1,\eps_2]$} 
 & D2-branes on $\R^3$
 & \\
&&&&& in IIA[$\R^6_A\times\C^2_B$] \\
\cline{3-7}
&&\multirow{2}{*}{2 (A)}
 & \multirow{2}{*}{$\R^3_{\mr{dR}}$} 
 & \multirow{2}{*}{$\left (\C[\eps_1,\eps_2], \frac{\del}{\del \eps_1}\right )$} 
 & D2-branes on $\R^3$
 & $\{0\}\subset \C^2_{w_1,w_2}$ \\
&&&&& in IIA[$\R^6_A\times\C^2_B$] & $w_1\in \PV(\C^2)$  \\
\hline
\end{tabular}
\caption{Twists of maximally supersymmetric Yang--Mills theories as open-string field theories}\label{table1}
\end{table}

\begin{table}
\centering
\renewcommand{\arraystretch}{0.6}
\begin{tabular}{c|c|c||c|c||c|c}
{\(d\)} 
 & \( {\mathcal{N}}\)
 & \text{twist}
 & \text{spacetime}
 & \( {A}\)
 & \text{D-branes}
 & \text{closed string} \\
\hline
\multirow{2}{*}{6}  & \multirow{2}{*}{$(1,0)$} & \multirow{2}{*}{$(1,0)$} & \multirow{2}{*}{$\C^3_{\ol{\del}}$} & \multirow{2}{*}{$\C[\delta]/(\delta^2)$} & D7-branes on $\C^3 \times \bb P^1 $ in  \\
&&&&& IIB[$(\C^3 \times \tot_{\bb P^1}(\mc O(-2)))_B$]  \\
\hline
\multirow{2}{*}{5}  & \multirow{2}{*}{1} & \multirow{2}{*}{1}  & \multirow{2}{*}{$ \R_{\mr{dR}} \times \C^2_{\ol{\del}} $}  & \multirow{2}{*}{$\C[\delta]/(\delta^2)$}  & D6-branes on $\R \times \C^2 \times  \bb P^1 $ in \\
&&&&& IIA[$\R^2_A \times (\C^2 \times \tot_{\bb P^1}(\mc O(-2)))_B$] \\
\hline
\multirow{6}{*}{4} & \multirow{6}{*}{2}& \multirow{2}{*}{$(1,0)$} & \multirow{2}{*}{$\C^2_{\ol{\del}}$} & \multirow{2}{*}{$\C[\eps,\delta]/(\delta^2)$} &  D5-branes on $\C^2 \times \bb P^1 $ in   &   \\
&&&&& IIB[$(\C^3 \times \tot_{\bb P^1}(\mc O(-2)))_B$]  &   \\
\cline{3-7}
&&\multirow{2}{*}{$(1,1)$} &\multirow{2}{*}{$\R^2_{\mr{dR}}\times \C_{\ol{\del}}$}  & \multirow{2}{*}{$\C[\delta]/(\delta^2)$}  & D5-branes on $\R^2\times  \C \times \bb P^1 $  in \\
&&&&& IIB[$\R^4_A\times  (\C \times \tot_{\bb P^1}(\mc O(-2)))_B$]  \\
\cline{3-7}
&&\multirow{2}{*}{$(2,0)$}& \multirow{2}{*}{$\C^2_{\ol{\del}} $}  &\multirow{2}{*}{$\left (\C[\eps,\delta]/(\delta^2) , \frac{\del}{\del \eps}\right ) $} &  D5-branes on $\C^2 \times \bb P^1 $ in    & $\C^2_{z_i}\subset \C^3_{z_i,w}$ \\
&&&&& IIB[$(\C^3 \times \tot_{\bb P^1}(\mc O(-2)))_B$] & $w\in \PV(\C^3)$  \\
\hline
\multirow{6}{*}{3} & \multirow{6}{*}{4} & \multirow{2}{*}{1} & \multirow{2}{*}{$ \R_{\mr{dR}}\times \C_{\ol{\del}} $} & \multirow{2}{*}{$\C[\eps,\delta]/(\delta^2)$} & D4-branes on $\R\times \C\times \bb P^1$ in &  \\
&&&&&IIA[$\R^2_A \times (\C^2 \times \tot_{\bb P^1}(\mc O(-2)))_B$] &  \\
\cline{3-7}
&&	 \multirow{2}{*}{2 (B)}  & \multirow{2}{*}{$\R^3_{\mr{dR}}$}  & \multirow{2}{*}{$\C[\delta]/(\delta^2)$} & D4-branes on $\R^3 \times \bb P^1$ in \\
&&&&& IIA[$\R^6_A \times \tot_{\bb P^1}(\mc O(-2))_B$] \\
\cline{3-7}
&&  \multirow{2}{*}{2 (A)}& \multirow{2}{*}{$ \R_{\mr{dR}}\times \C_{\ol{\del}}$} & \multirow{2}{*}{$\left (\C[\eps,\delta]/(\delta^2) , \frac{\del}{\del \eps}\right ) $} &  D4-branes on $\R\times \C\times \bb P^1$ in & $\C_z\subset \C^2_{z,w}$ \\
&&&&&  IIA[$\R^2_A \times (\C^2 \times \tot_{\bb P^1}(\mc O(-2)))_B$] & $w\in \PV(\C^2)$ \\
\hline
\end{tabular}
\caption{Twists of pure supersymmetric Yang--Mills theories with 8 supercharges as open-string field theories}\label{table2}
\end{table}

\begin{table}
\centering
\renewcommand{\arraystretch}{0.6}
\begin{tabular}{c|c|c||c|c||c}
{\(d\)} 
 & \( {\mathcal{N}}\)
 & \text{twist}
 & \text{spacetime}
 & \( {A}\)
 & \text{D-branes}\\
\hline
\multirow{4}{*}{4}  & \multirow{4}{*}{1} & \multirow{4}{*}{$(1,0)$} & \multirow{4}{*}{$\C^2_{\ol{\del}}$} & \multirow{4}{*}{$\C[\eps']$}  & D5-branes on $\C^2 \times \bb P^1 $ in  \\
&&&&&  IIB[$(\C^2 \times \tot_{\bb P^1}( \mc O(-1)\oplus \mc O(-1))) _B$]   \\
\cline{6-6}
 & &   &  &   & D7-branes on $\C^2 \times  \bb P^2$ in  \\
&&&&& IIB[$(\C^2 \times \tot_{\bb P^2}(\mc O(-3)) )_B$]   \\
\hline
\multirow{4}{*}{3} & \multirow{4}{*}{2}& \multirow{4}{*}{1} & \multirow{4}{*}{$ \R_{\mr{dR}}\times \C_{\ol{\del}}$} & \multirow{4}{*}{$ \C[\eps'] $} & D4-branes on $\R \times \C \times \bb P^1 $ in   \\
&&&&&  IIA[$\R^2_A\times  (\C \times \tot_{\bb P^1}( \mc O(-1)\oplus \mc O(-1))) _B$]   \\
\cline{6-6}
 & &   &  &   & D6-branes on $\R \times  \C \times  \bb P^2$ in  \\
&&&&& IIA[$\R^2_A \times  (\C \times \tot_{\bb P^2}(\mc O(-3)) )_B$]   \\
\hline 
\end{tabular}
\caption{Twists of pure supersymmetric Yang--Mills theories with 4 supercharges as open-string field theories}\label{table3}
\end{table}

\begin{table}
\centering
\renewcommand{\arraystretch}{0.6}
\begin{tabular}{c|c||c|c||c|c}
 \( {\mathcal{N}}\) & \text{twist} & \text{spacetime} & \( {A}\) & \text{D-branes} & \text{closed string} \\
\hline
 \multirow{6}{*}{$(4,4)$} & \multirow{2}{*}{$(1,0)$} & \multirow{2}{*}{$\C_{\ol{\del}}$} & \multirow{2}{*}{$\C[\eps_1,\eps_2,\delta]/(\delta^2 )$} & D3-branes on $\C \times \bb P^1$  in \\
&&&& IIB[$(\C^3 \times \tot_{\bb P^1}(\mc O(-2)))_B$]  \\
\cline{2-6}
& \multirow{2}{*}{$(1,1)$ (B)} & \multirow{2}{*}{$\R^2 _{\mr{dR}}$}   & \multirow{2}{*}{$\C[\eps,\delta]/(\delta^2 )$} &  D3-branes on $\R^2 \times \bb P^1$  in \\
&&&& IIB[$\R^4_A \times (\C \times \tot_{\bb P^1}(\mc O(-2)))_B$]  \\
\cline{2-6}
&  \multirow{2}{*}{$(1,1)$ (A)}&\multirow{2}{*}{$\R^2 _{\mr{dR}}$}  & \multirow{2}{*}{$\left( \C[\eps,\delta]/(\delta^2 ) ,\frac{\del}{\del \eps } \right)$} &  D3-branes on $\R^2 \times \bb P^1$  in   &  $\{0\}\subset \C_w$ \\
&&&& IIB[$\R^4_A \times (\C \times \tot_{\bb P^1}(\mc O(-2)))_B$]  & $w\in \PV(\C)$ \\
\hline
 \multirow{12}{*}{$(2,2)$} & \multirow{4}{*}{$(1,0)$} & \multirow{4}{*}{$\C_{\ol{\del}}$} & \multirow{4}{*}{$\C[\eps,\eps']$} & D3-branes on $\C \times \bb P^1$  in \\
&&&& IIB[$(\C^2 \times \tot_{\bb P^1}(\mc O(-1)\oplus\mc O(-1) ))_B$]  \\
\cline{5-5}
&&&& D5-branes on $\C \times \bb P^2$  in \\
&&&& IIB[$(\C^2 \times \tot_{\bb P^2}(\mc O(-3)) )_B$]  \\
\cline{2-6}
& \multirow{4}{*}{$(1,1)$ (B)} & \multirow{4}{*}{$\R^2 _{\mr{dR}}$}   & \multirow{4}{*}{$\C[\eps']$} &  D3-branes on $\R^2 \times \bb P^1$  in \\
&&&& IIB[$\R^4_A \times  \tot_{\bb P^1}(\mc O(-1)\oplus\mc O(-1) )_B$]  \\
\cline{5-5}
&&&&  D5-branes on $\R^2 \times \bb P^2$  in \\
&&&& IIB[$\R^4_A \times  \tot_{\bb P^2}(\mc O(-3))_B$]  \\
\cline{2-6}
&  \multirow{4}{*}{$(1,1)$ (A)}&\multirow{4}{*}{$\C_{\ol{\del}}$}  & \multirow{4}{*}{$\left(\C[\eps,\eps'] ,\frac{\del}{\del \eps } \right)$} &  D3-branes on $\C \times \bb P^1$  in   &    \\
&&&& IIB[$(\C^2 \times \tot_{\bb P^1}(\mc O(-1)\oplus\mc O(-1) ))_B$] & $\C_z\subset \C^2_{z,w}$ \\
\cline{5-5}
&&&&  D5-branes on $\C \times \bb P^2$  in   &    $w\in \PV(\C^2)$ \\
&&&& IIB[$(\C^2 \times \tot_{\bb P^2}(\mc O(-3) ))_B$]  \\
\hline  \multirow{2}{*}{$(4,0)$} &  \multirow{2}{*}{$(1,0)$} &   \multirow{2}{*}{\(\C_{\ol{\del}}\) }&  \multirow{2}{*}{\(\C[\delta_1,\delta_2]/(\delta_1^2,\delta_2^2)\)}  & D5-branes on $\C\times \bb P^1 \times \bb P^1$ in  \\  
&&&& $\IIB[(\C \times \tot_{\bb P^1 \times \bb P^1}( \Omega^1_{\bb P^1 \times \bb P^1} ))_B ]$ \\
\hline 
\multirow{4}{*}{$(2,0)$} &  \multirow{4}{*}{$(1,0)$} &   \multirow{4}{*}{\(\C_{\ol{\del}}\) }&  \multirow{4}{*}{\(\C[\delta]/(\delta^2)\)}  & D5-branes on $\C\times \bb P^2$ in  \\  
&&&& IIB[$(\C \times \tot_{\bb P^2}(\mc O(-1) \oplus \mc O(-2)   ) )_B$]  \\
\cline{5-5}
 &  &  &  & D7-branes on $\C\times \bb P^3$ in  \\  
&&&& IIB[$(\C \times \tot_{\bb P^3}(\mc O(-4)   ) )_B$]    \\
\hline 
\end{tabular}
\caption{Twists of pure supersymmetric Yang--Mills theories in two dimensions as open-string field theories}\label{table4}
\end{table}

\FloatBarrier

The main body of this article is devoted to a precise formulation and proof of this theorem, along with several related remarks. We also explain how to interpret the table that summarizes key results. Once the key mathematical structures are clarified, the argument becomes straightforward.

In Section 2, we review the classical BV (Batalin--Vilkovisky) formalism and introduce a class of Chern--Simons-like theories constructed from cyclic graded-commutative algebras. The main result of this section, Proposition \ref{prop:twistedSYMisCS}, asserts that every twist of pure supersymmetric Yang--Mills theory is such a generalized Chern--Simons theory. In Section 3, we present another main result, Proposition \ref{prop:CSfromTS}, which shows how each such generalized Chern--Simons theory arises as an open-string field theory from a suitable topological string background. The main theorem follows immediately from these two propositions. After proving it, we provide further discussion and remarks.

\subsubsection*{Convention}: Throughout this article, we work with a complexified gauge group, specifically $\GL(N)$ instead of $\U(N)$. We also use the classical BV formalism in a $\Z/2\Z$-graded setting, while presenting everything in a form that can be interpreted as $\Z$-graded whenever possible.

\subsubsection*{Acknowledgements}: We are grateful to Surya Raghavendran for valuable discussions on various aspects of string theory and for providing detailed comments on an initial draft of this article, which significantly improved the exposition. This work was supported by the National Research Foundation of Korea (NRF) through grants funded by the Korean government (No. 2022R1F1A107114213) and the Ministry of Education under the LAMP Program (No. RS-2023-00301976).

\section{Generalized Chern--Simons Theories}

\subsection{Classical BV Formalism and AKSZ Formalism}

In this subsection, we introduce the classical BV formalism as a framework for describing classical field theory. After discussing the AKSZ formalism, we introduce a class of field theories of our interest, called generalized Chern--Simons theories. Since our contribution makes use of the framework in a way that does not depend on detailed formal definitions, we provide only a brief overview with illustrative examples. Readers are encouraged to refer to the cited papers for additional details and further context.

\subsubsection{} 

We begin by recalling the definition of a perturbative classical field theory in the BV formalism, following \cite{CostelloSUSY}.

\begin{defn}\label{defn:pcft}
Let $M$ be a $d$-dimensional manifold. A \emph{($d$-dimensional) perturbative classical field theory} on $M$ is a local formal moduli problem on $M$ with a $(-1)$-shifted symplectic structure, or equivalently, a local $L_\infty$-algebra on $M$ with a non-degenerate invariant pairing of cohomological degree $-3$. 
\end{defn}

\begin{rmk}\label{rmk:ftdt}
This definition is meant to describe the structure present on the formal neighborhood of a point in the moduli space of solutions to a variational PDE; see Remark \ref{rmk:triangle} as well. 

Here $M$ denotes a spacetime, where locality --- a fundamental requirement of field theory --- is defined. The existence of a $(-1)$-shifted symplectic structure on the moduli space of solutions to the equations of motion is a key feature of the classical BV formalism. Furthermore, the fundamental theorem of deformation theory by Pridham and Lurie \cite{Pridham, LurieModuli} establishes an equivalence
\[\xymatrix{
\text{Moduli}\ar@<0.5ex>[rr]^-{\Omega= \bb{T}[-1]} &  & L_\infty\text{-Alg}\ar@<0.5ex>[ll]^-{B = \mr{MC}}
}\]
where $\text{Moduli}$ is the category of formal moduli problems, $L_\infty\text{-Alg}$ is the category of $L_\infty$-algebras, $\Omega =\bb T[-1]\colon \text{Moduli}\to L_\infty\text{-Alg}$ is the shifted tangent complex functor, and $B=\mr{MC} \colon L_\infty\text{-Alg}\to \text{Moduli}$ is the Maurer--Cartan functor. Under this equivalence, a $k$-shifted symplectic structure on a formal moduli problem $F$ corresponds to a non-degenerate invariant pairing of degree $k-2$ on the $L_\infty$-algebra $\bb T[-1]F$. 
\end{rmk}

\begin{ex}[Perturbative Chern--Simons theory]\label{ex:perturbativeCS}
Let $M$ be a (compact oriented) 3-manifold. Let $G$ be a semisimple Lie group with a fixed symmetric invariant pairing $\langle -,-\rangle_{\mf g }$ on the Lie algebra $\mf g$.

Chern--Simons theory is traditionally described with the space of fields $\Omega^1(M, \mf g)$, the $1$-forms on $M$ valued in the Lie algebra $\mf g$, interpreted as the space of connections on the trivial principal $G$-bundle over $M$. The Chern--Simons action functional $S_{\mr{CS}}$ is given by \[S_{\mr{CS}} (A) = \frac{1}{2}\int_M \langle A, d A \rangle  + \frac{1}{6}\int_M \langle A,[A,A]\rangle  \]
where $\langle -,-\rangle$ is the pairing induced from the wedge product of differential forms and the invariant pairing $\langle -,-\rangle_{\mf g}$ on $\mf g$. The theory has infinitesimal gauge symmetries represented by $\Omega^0(M, \mf g)$, since the action functional is invariant under the transformation $A \mapsto dX + [X, A]$ for $X \in \Omega^0(M, \mf g)$. To derive the equations of motion, we vary with respect to $A_0 \in \Omega^1_c(M,\mf g) $, which yields $\frac{\delta S_{\mr{CS}} }{\delta A_0}(A) = \int_M \langle A_0, F_A\rangle$, where $F_A = dA + \frac{1}{2}[A,A]$ is the curvature of the connection $A$. Hence the space of solutions to the equations of motion consists of flat $G$-connections (modulo gauge transformations).

We now describe how Chern--Simons theory is formulated within the BV framework, in terms of a local DG Lie algebra\footnote{The locality ensures that the following works over any open subset $U$ of $M$.} with an invariant pairing:
\begin{itemize}
	\item the graded vector space in degrees 0, 1, 2, and 3 \[\xymatrix@R-20pt{
\ul{0} & \ul{1} &\ul{2} &\ul{3} \\
\Omega^0(M,\mf g)   & \Omega^1(M,\mf g)   & \Omega^2(M,\mf g )    &  \Omega^3(M ,\mf g)  }\] is the space of global sections of a graded vector bundle $\bigoplus_{i=0}^3 \wedge^i(T^*M)\otimes \mf g$ over $M$;
	\item the differential $d \colon \Omega^i(M,\mf g)\to \Omega^{i+1}(M,\mf g)$ is induced from the de Rham differential on $ \Omega^\bullet(M)$;
	\item the graded Lie bracket $[-,-]\colon \Omega^i(M,\mf g)\otimes \Omega^j(M,\mf g)\to \Omega^{i+j}(M,\mf g)$ is induced from the graded-commutative algebra structure on $\Omega^\bullet(M)$ and the Lie bracket on $\mf g$;
	\item the symmetric invariant pairing $\abracket{-,-}\colon \Omega^\bullet (M,\mf g)\otimes \Omega^\bullet(M,\mf g)\to  \Omega^3(M)[-3]$ is induced by the algebra structure on $\Omega^\bullet(M)$ and the invariant pairing $\langle -,-\rangle_{\mf g }$ on $\mf g$. Here, $\Omega^3(M)$ denotes the space of the top forms, i.e., sections of the density line bundle on $M$.
\end{itemize}

From this description of a local DG Lie algebra, one can recover the traditional description in terms of an action functional. That is, consider the BV action functional \[S (\alpha) := \frac{1}{2}\int_M \langle \alpha  , d\alpha\rangle + \frac{1}{6} \int_M \langle \alpha,[\alpha,\alpha]\rangle,\]
where $\alpha \in \mc E_{\mr{CS}} := \Omega^\bullet(M, \mf g)[1]$ represents a general field in the BV formalism. If we write $\alpha=X+A+A^\vee+X^\vee$, where $X\in \Omega^0(M,\mf g)$, $A \in \Omega^1(M,\mf g)$, $A^\vee \in \Omega^2(M,\mf g)$, and $X^\vee \in \Omega^3(M,\mf g)$, then the BV action functional expands out to yield \[S  (X,A,A^\vee,X^\vee)  = S _{\mr{CS}} (A) + \int_M  \abracket{  A^\vee , dX + [X,A]} + \frac{1}{2} \int_M \abracket{   X^\vee, [X,X] } .\] Here the term involving $A^\vee$ encodes gauge transformations. In fact, the space of solutions to the Maurer--Cartan equations of this DG Lie algebra recovers flat $G$-connections modulo gauge transformations.  
\end{ex}

\begin{rmk}
In general, a perturbative classical field theory in the Lagrangian formalism can be defined in terms of the BV space of fields $\mc E$ with a $(-1)$-shifted symplectic form $\omega$ and a local $L_\infty$-algebra structure $(Q, \ell_2, \ell_3, \dots)$ on its cohomological shift $\mc E[-1]$. The corresponding action functional $S$ can then be expressed as
\[ S(\alpha) = \frac{1}{2} \omega(\alpha, Q \alpha) + \sum_{k \geq 2} \frac{1}{(k+1)!} \omega(\alpha, \ell_k(\alpha, \dots, \alpha)) \]
where $\alpha \in \mc E$. Thus, this definition of perturbative classical field theory recovers and extends the traditional formulation of classical field theory.   
\end{rmk}

\begin{rmk}\label{rmk:cptori}
Neither $M$ being compact nor oriented is necessary in this framework. Without these assumptions, we still have a $(-1)$-shifted symplectic structure with respect to Verdier duality, where the invariant pairing takes values in the density line bundle rather than in the space of top forms.
\end{rmk}

\subsubsection{} We now provide an overview of non-perturbative classical field theory. While this material is not strictly necessary for what follows, it offers an important perspective on the underlying framework.

As mentioned in Remark \ref{rmk:ftdt}, a characterizing feature of the classical BV formalism is the existence of a $(-1)$-shifted symplectic structure on the spaces of solutions to the equations of motion. Hence, a $d$-dimensional non-perturbative classical field theory on $M$ can be described as a sheaf of derived stacks equipped with a $(-1)$-shifted symplectic structure (in the sense of Verdier) on $M$. A rigorous definition of non-perturbative classical field theory along these lines has not yet appeared in the literature. On the other hand, we have examples of a non-perturbative classical field theory from AKSZ--PTVV formalism introduced in \cite{PTVV}:
\begin{thm}\label{thm:PTVV}
Let $\mc Y,\mc Z$ be derived stacks. Suppose $\mc Y$ is equipped with an $m$-orientation and $\mc Z$ is equipped with an $n$-shifted symplectic structure. Then the mapping stack $\umap(\mc Y,\mc Z)$ carries an induced $(n-m)$-shifted symplectic structure.  
\end{thm}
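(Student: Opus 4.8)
The plan is to follow the transgression argument of \cite{PTVV}. First I would fix the ambient formalism of forms on derived stacks: for a derived stack $\mc X$ one has the space of $p$-forms of degree $n$, written $\mc A^p(\mc X, n)$, and of closed $p$-forms $\mc A^{p,\mr{cl}}(\mc X,n)$, both built from the cotangent complex $\bb L_{\mc X}$ and the derived de Rham complex, together with the forgetful map $\mc A^{p,\mr{cl}}(\mc X,n)\to \mc A^{p}(\mc X,n)$. In this language an $n$-shifted symplectic structure on $\mc Z$ is a closed $2$-form $\omega_{\mc Z}\in \mc A^{2,\mr{cl}}(\mc Z,n)$ whose underlying $2$-form is non-degenerate, meaning contraction induces an equivalence $\bb T_{\mc Z}\xrightarrow{\sim}\bb L_{\mc Z}[n]$; and an $m$-orientation on (the $\O$-compact stack) $\mc Y$ is a trace map $[\mc Y]\colon \Gamma(\mc Y,\O_{\mc Y})\to k[-m]$ such that, for every perfect complex $\mc E$ on $\mc Y$, the induced pairing $\Gamma(\mc Y,\mc E)\otimes\Gamma(\mc Y,\mc E^\vee)\to \Gamma(\mc Y,\O_{\mc Y})\xrightarrow{[\mc Y]}k[-m]$ is non-degenerate, exhibiting a Serre-type duality $\Gamma(\mc Y,\mc E)\simeq \Gamma(\mc Y,\mc E^\vee)^\vee[-m]$.

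Second, I would construct the form by transgression. The evaluation map $\ev\colon \umap(\mc Y,\mc Z)\times \mc Y\to \mc Z$ yields, by functoriality of forms, a pullback $\ev^*\colon \mc A^{2,\mr{cl}}(\mc Z,n)\to \mc A^{2,\mr{cl}}(\umap(\mc Y,\mc Z)\times \mc Y,n)$. A K\"unneth-type decomposition of the space of closed forms on the product, combined with the integration map supplied by the $m$-orientation, then defines a pushforward $\int_{[\mc Y]}\colon \mc A^{2,\mr{cl}}(\umap(\mc Y,\mc Z)\times \mc Y,n)\to \mc A^{2,\mr{cl}}(\umap(\mc Y,\mc Z),n-m)$. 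Setting $\omega:=\int_{[\mc Y]}\ev^*\omega_{\mc Z}$ produces a closed $2$-form of degree $n-m$ on $\umap(\mc Y,\mc Z)$; closedness is automatic, since $\ev^*$ and $\int_{[\mc Y]}$ are both maps of spaces of \emph{closed} forms.

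Third, I would verify non-degeneracy, which I expect to be the main obstacle. The key input is the tangent complex of the mapping stack: at a point $f\colon \mc Y\to \mc Z$ one has $\bb T_f\umap(\mc Y,\mc Z)\simeq \Gamma(\mc Y,f^*\bb T_{\mc Z})$. Non-degeneracy of $\omega_{\mc Z}$ gives $f^*\bb T_{\mc Z}\simeq f^*\bb L_{\mc Z}[n]$, hence $\bb T_f\umap\simeq \Gamma(\mc Y,f^*\bb L_{\mc Z})[n]$; applying the orientation duality to $\mc E=f^*\bb L_{\mc Z}$ gives $\Gamma(\mc Y,f^*\bb L_{\mc Z})\simeq \Gamma(\mc Y,f^*\bb T_{\mc Z})^\vee[-m]$, so that combining the two equivalences yields $\bb T_f\umap\simeq (\bb T_f\umap)^\vee[n-m]=\bb L_f\umap[n-m]$. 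I would then check that the underlying $2$-form of the transgressed $\omega$ induces precisely this composite pairing on tangent complexes, so that the contraction $\bb T_{\umap}\to \bb L_{\umap}[n-m]$ is an equivalence, i.e.\ $\omega$ is non-degenerate.

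The delicate points are twofold. First, upgrading ``integration over $\mc Y$'' from a cohomology-level statement to a genuine map on the full complex of closed forms requires the functoriality and (lax) monoidal structure of the forms/de Rham functor together with a K\"unneth formula for forms on $\umap(\mc Y,\mc Z)\times \mc Y$; this is where the $\O$-compactness of $\mc Y$ is essential. Second, and more seriously, one must match the transgressed pairing on $\bb T_{\umap}$ with the composite of the symplectic pairing on $\mc Z$ and the Poincar\'e/orientation pairing on $\mc Y$: it is this compatibility, rather than any formal manipulation of degrees, that makes non-degeneracy of $\omega$ follow from the two separate non-degeneracy hypotheses. I would expect this matching to be the crux of the argument, with the construction of the integration map being technical but standard once the forms functor is correctly set up.
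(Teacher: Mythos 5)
The paper does not prove this statement: Theorem \ref{thm:PTVV} is quoted verbatim from the cited reference \cite{PTVV}, and the surrounding text only explains how it is used to produce AKSZ-type field theories. So your proposal can only be compared against the original PTVV argument, and judged on that basis it is a correct reconstruction of it: the transgression of $\omega_{\mc Z}$ along the evaluation map, the integration map on closed forms supplied by the orientation, and non-degeneracy deduced by combining $\bb T_f\umap(\mc Y,\mc Z)\simeq \Gamma(\mc Y,f^*\bb T_{\mc Z})$ with the Serre-type duality encoded in the orientation is exactly the structure of the proof of the mapping-stack theorem in \cite{PTVV}. You also correctly isolate the two genuinely technical points (lifting integration from cohomology to the full complex of closed forms via $\mc O$-compactness and a K\"unneth formula, and matching the transgressed pairing with the composite of the symplectic and orientation pairings); these are precisely where the work lies in the original paper.

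One detail worth emphasizing in your favor: you build the non-degeneracy (Poincar\'e/Serre duality) condition into the definition of an $m$-orientation, whereas the paper's own gloss describes an orientation merely as a linear map $\int_{[\mc Y]}\colon \Gamma(\mc Y,\mc O_{\mc Y})\to \C[-m]$. That extra condition is not optional --- without it the zero map would qualify as an orientation and the transgressed $2$-form could be degenerate (even zero), so the theorem as stated would fail. Your formulation is the one under which the statement is actually true.
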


When $\mc Y$ is constructed from a $d$-dimensional real manifold, and $\umap(\mc Y, \mc Z)$ is a $(-1)$-shifted symplectic derived stack, we interpret it as the non-perturbative data of a $d$-dimensional classical field theory. We then call the resulting theory the \emph{AKSZ (non-perturbative) field theory} described by $\umap(\mc Y, \mc Z)$.

\begin{rmk}\label{rmk:triangle}
We now explain how this result relates to the perturbative framework described earlier, continuing from Remark \ref{rmk:ftdt}. Consider the following commuting triangle
\[\xymatrix@R-10pt{
 && \text{dSt}_\ast  \ar[dll]_{ (-)^\wedge_\ast } \ar[drr]^-{ \bb{T}_\ast [-1]  } \\
\text{Moduli}\ar@<0.5ex>[rrrr]^-{\Omega= \bb{T}[-1]} & && & L_\infty\text{-Alg}\ar@<0.5ex>[llll]^-{B = \mr{MC}}
}\]
where
\begin{itemize}
	\item $\text{dSt}_\ast $ is the category of pointed derived stacks $(\mc X,x)$, where $\mc X$ is a derived stack and $x\in \mc X(\C)$,
	\item the completion functor $(-)^\wedge_\ast \colon \text{dSt}_\ast\to \text{Moduli} $ is given by $(\mc X,x)\mapsto \mc X^\wedge_x$, and
	\item the shifted tangent complex functor $\bb T_\ast [-1] \colon \text{dSt}_\ast\to L_\infty\text{-Alg}$ is given by $(\mc X,x)\mapsto \bb T_x [-1] \mc X$.
	\end{itemize}
The completion functor $(-)^\wedge_\ast$ corresponds to the formal neighborhood around a given point on the derived stack, while the shifted tangent complex $\bb T_\ast[-1]$ encodes the infinitesimal deformation theory at that point. In a physical situation, a derived stack $\mc X$ arises as the moduli space of solutions to the equations of motion. For a particular solution $x\in \mc X$, the resulting $L_\infty$-algebra structure on $\bb T_x[-1]\mc X$ is precisely the $L_\infty$-algebra structure of Definition \ref{defn:pcft}. This explains how perturbative field theory emerges as an infinitesimal approximation around a specific solution within the broader non-perturbative framework.
\end{rmk}

\begin{ex}[Non-perturbative Chern--Simons theory]
A closed oriented $d$-manifold $M$ gives a $d$-oriented derived stack $M_{\text{B}}$ called a \emph{Betti stack}. A reductive group $G$ gives a 2-shifted symplectic stack $ BG$ whose symplectic structure is induced from the invariant pairing $\abracket{-,-}_{\mf g}$. Then (3-dimensional) non-perturbative Chern--Simons theory is described by a $(-1)$-shifted symplectic stack
 \[\Loc_G(M): = \umap(M_{\mr{B}},BG)\]
which represents the moduli space of $G$-local systems on $M$. To relate this to the perturbative framework, we examine the trivial local system on $M$. The shifted tangent complex $\bb{T}_{\text{triv}}[-1] \Loc_G(M)$, which encodes the infinitesimal deformations around the trivial local system, is given by the DG Lie algebra $( \Omega^\bullet(M , \mf g) , d  , [-,-])$ together with an invariant pairing $\abracket{-,-}$, recovering the description in Example \ref{ex:perturbativeCS}. In other words, the perturbative description of Chern--Simons theory arises from probing the formal neighborhood around the trivial local system in the non-perturbative moduli space.
  
\end{ex}

\subsubsection{} We now discuss how perturbative classical field theory arises from non-perturbative field theory described by mapping stacks $\umap(\mc Y, \mc Z)$. A crucial point is that if $\mc Y$ satisfies suitable finiteness properties (see below for examples), the tangent complex of the mapping stack $\umap(\mc Y, \mc Z)$ is given by $\bb{T}_{f} \umap(\mc Y, \mc Z) \simeq \R \Gamma(\mc Y, f^* \bb{T}_{\mc Z})$.  Hence, an understanding of $\mc O_{\mc Y}$ and $\Gamma(\mc Y, \mc O_{\mc Y})$ is essential for extracting a perturbative description from a given non-perturbative description. In fact, an \emph{$m$-orientation} on $\mc Y$ is defined by a ``fundamental class'' on $\mc Y$, which is a linear map $\int_{[\mc Y]} \colon \Gamma(\mc Y,\mc O_{\mc Y} )\to \C[-m]$.

\begin{ex}\hfill \label{ex:orient}
\begin{itemize}
	\item For a compact, smooth, oriented $d$-manifold $M$, the de Rham space \[M_{\mr{dR}} := ( M, (\Omega^\bullet_M,d_M)  )\] is $d$-oriented, with the fundamental class \[\int_{[M]} \colon \Gamma(M,  (\Omega^\bullet_M,d_M)) =H_{\mr{dR}}^\bullet(M) \to \C[-d].\]
	\item For a smooth, proper complex $d$-manifold $X$, the Dolbeault space \[ X_{\mr{Dol}} : = ( X, (\Omega^{\bullet ,\bullet}_X , \ol{\del}_X )  )\] is $2d$-oriented, with the fundamental class given by \[\int_{[X]} \colon \Gamma(X,  (\Omega^{\bullet,\bullet}_X, \ol{\del}_X )) =H_{\mr{Dol}}^{\bullet,\bullet} (X) \to \C[-2d].\]
	\item For a smooth, proper complex $d$-manifold $X$, the space \[X_{\ol{\del}} := ( X, (\Omega^{0,\bullet}_X , \ol{\del}_X ) )\] has no canonical orientation because  $\Gamma( X, (\Omega^{0,\bullet}_X , \ol{\del}_X ) ) =H_{\mr{Dol}}^{0,\bullet }(X) $. However, if $X$ is equipped with a Calabi--Yau structure $\Omega_X$, then  we have  the fundamental class given by \[\int_{[X]} \colon \Gamma( X, (\Omega^{0,\bullet}_X , \ol{\del}_X ) ) = H_{\mr{Dol}}^{0,\bullet }(X) \cong H_{\mr{Dol}}^{d,\bullet }(X) \to \C[-d].\] In other words, for a Calabi--Yau $d$-fold $X$, the space   $X_{\ol{\del}  }$ is canonically $d$-oriented. 
\end{itemize}	
\end{ex}

\begin{rmk}
As noted in Remark \ref{rmk:cptori}, compactness or properness is necessary to obtain orientation data in a strict sense, but these conditions are not necessary for the BV description.	 
\end{rmk}

\begin{ex}
An important example is perturbative Chern--Simons theory described by
\[\umap(M_{\mr{dR}}, B \mf g ) := \Omega^\bullet(M , \mf g)[1],\]
or equivalently, in terms of local $L_\infty$-algebra, $\bracket{ \Omega^\bullet(M , \mf g), d, [-,-], \abracket{-,-} }$. 	
\end{ex}

\begin{rmk}
This raises the question of how global, non-perturbative field theory $\umap(M_{\mr{B}}, BG)$ in an algebraic context relates to formal, perturbative field theory $\umap(M_{\mr{dR}}, B \mf g)$ in a smooth context. For further discussion on this point, see \cite[Remark 2.5]{BY1}.   
\end{rmk}

\subsubsection{} According to Theorem \ref{thm:PTVV} and Example \ref{ex:orient}, the mapping stack 
\[ \umap( M_{\mr{dR}} \times  X_{\ol{\del}}\times Y_{\mr{Dol}} , B\mf g ):=\Omega^\bullet(M)\otimes \Omega^{0,\bullet}(X)  \otimes \Omega^{\bullet,\bullet}(Y)\otimes  \mf g[1]  \]
is $(-1)$-shifted symplectic if $\dim _\C X + 2 \dim_\C Y  + \dim_\R M=3$, defining a perturbative classical field theory in the sense of Definition \ref{defn:pcft}. This is by definition \emph{AKSZ (perturbative) field theory} described by $\umap( M_{\mr{dR}} \times  X_{\ol{\del}}\times Y_{\mr{Dol}} , B\mf g )$. We now introduce a generalization of this construction, where $\mf g$ is replaced by any \emph{cyclic $L_\infty$-algebra} --- an $L_\infty$-algebra with a non-degenerate invariant pairing --- which we denote generically by $\mf l$.

\begin{defn}\label{defn:genCS}\cite{ESW}
Let $M$ be a smooth oriented manifold, $(X,\Omega_X)$ a Calabi--Yau manifold, and $Y$ a complex manifold. Let $(\mf l,d_{\mf l} ,\ell_2,\cdots) $ be a cyclic $L_\infty$-algebra with pairing $\langle -,-\rangle$ of cohomological degree $ \dim_\R M+\dim _\C X + 2 \dim_\C Y  -3$. Then the corresponding \emph{generalized Chern--Simons theory} on $M\times X\times Y$ is defined as the $( \dim_\R M+2\dim _\C X + 2 \dim_\C Y )$-dimensional AKSZ theory described by 
\[ \umap( M_{\mr{dR}}\times  X_{\ol{\del}}\times Y_{\mr{Dol}} , B\mf l ):= \Omega^\bullet(M)\otimes  \Omega^{0,\bullet}(X)  \otimes \Omega^{\bullet,\bullet}(Y)\otimes \mf l[1],  \]
or equivalently, by the associated local $L_\infty$-algebra \[\Omega^\bullet(M)\otimes  \Omega^{0,\bullet}(X)  \otimes \Omega^{\bullet,\bullet}(Y)\otimes  \mf l,\] with the $L_\infty$-algebra structure and cyclic pairing induced from the tensor product of the factors. 
\end{defn}

Note that the degree of the invariant pairing is determined so that the mapping stack is $(-1)$-shifted symplectic.\footnote{However,  in the bulk of the paper, the theories are only $\Z/2\Z$-graded so an odd-symplectic structure suffices.} The corresponding action functional for this generalized Chern--Simons theory is given by
\[S(\alpha ) =  \frac{1}{2} \int_{M\times X\times Y  } \langle \alpha \wedge (d_{M}  +  \ol{\del}_X    + \ol{\del}_Y +  d_{\mf l} ) \alpha  \rangle \wedge   \Omega_X +  \sum_{n\geq 2} \frac{1}{(n+1)!} \int_{M\times X\times Y } \langle \alpha , \ell_n (\alpha,\cdots,\alpha  )\rangle \wedge \Omega_X. \]

\begin{ex}[Chern--Simons theories]\label{ex:CStheories}
Let $\mf g$ be the Lie algebra of a semisimple Lie group. Since it has a non-degenerate invariant symmetric pairing of cohomological degree 0, one can define a corresponding generalized Chern--Simons theory on $M\times X\times Y$ where $\dim_\R M+  \dim _\C X + 2 \dim_\C Y  =3$. Note that a field theory satisfying this condition has dimension $d = \dim_\R M+  2\dim _\C X + 2 \dim_\C Y$. The following table lists the possible configurations for generalized Chern--Simons theories with $\mf l=\mf g$:
\begin{center}
\begin{tabular}{|c|c|c|c|c|}
\hline
 & $\dim_\R M$ & $\dim_\C X$ & $\dim_\C Y$ & dimension $d$ \\
\hline
 \circled{1} & 3 & 0 & 0 & 3 \\
 \circled{2} & 1 & 0 & 1 & 3 \\
 \circled{3} & 2 & 1 & 0 & 4 \\
 \circled{4} & 0 & 1 & 1 & 4 \\
 \circled{5} & 1 & 2 & 0 & 5 \\
 \circled{6} & 0 & 3 & 0 & 6 \\
\hline 
\end{tabular}
\end{center}
\begin{itemize}
	\item [ \circled{1}] This recovers 3-dimensional Chern--Simons theory. Non-perturbatively, the phase space (or the moduli space of germs of solutions to the equations of motion along a codimension-1 submanifold $\Sigma$) is the moduli space $\Loc_G(\Sigma)$ of $G$-local systems on $\Sigma $, which depends only the topology of $\Sigma$.
	\item[\circled{2}] This recovers what \cite{BY1} calls 3-dimensional \emph{critical} Chern--Simons theory. It is termed “critical” because it corresponds to Chern--Simons theory at the critical level. Its phase space is the moduli space \(\higgs_G(\Sigma)\) of \(G\)-Higgs bundles on \(\Sigma\), which depends on the complex structure of \(\Sigma\).

\item[\circled{3}] This is 4-dimensional Chern--Simons theory. There are exactly three Calabi--Yau 1-folds: \(\C\), \(\C^\times\), and an elliptic curve \(E\). These theories are closely related to integrable systems with a spectral parameter on \(\C\), \(\C^\times\), or \(E\), and hence to the Yangian, quantum loop group, and elliptic quantum group. See \cite{CostelloYangian, CWY1} and subsequent works for more details.

\item[\circled{4}] This is 4-dimensional ``critical'' Chern--Simons theory. In contrast to ordinary four-dimensional Chern--Simons theory, it does not exhibit any topological dependence on spacetime.

\item[\circled{5}] This is 5-dimensional Chern--Simons theory. A variant of this theory is expected to have deep connections to the quantum double loop group. For a more precise statement and further context, see \cite{CostelloM}.

\item[\circled{6}] This is 6-dimensional Chern--Simons theory, also called \emph{holomorphic Chern--Simons theory}, which is closely related to Donaldson--Thomas (DT) invariants of a Calabi--Yau 3-fold.\footnote{One can also regard DT invariants as related to the universal bulk theory \cite{BY1} of holomorphic Chern--Simons theory, namely, the A-twist of 7-dimensional \(\mc N=1\) gauge theory. For further discussion in the context we use here, see \cite[Remark 4.10]{RY1}, where we outline a physical origin of the GW/DT correspondence.}
\end{itemize}
\end{ex}

\begin{ex}[Generalized BF theories]\label{ex:BF}
Generalized BF theories can be understood as special cases of generalized Chern--Simons theories, where the $L_\infty$-algebra is constructed from a semi-direct product.

Let $\mf h$ be an $L_\infty$-algebra. Let $\mf l = \mf h \ltimes (\mf h^*[d-3])$ be the semi-direct product using the coadjoint action of $\mf h$ on $\mf h^*$, equipped with a natural pairing $\langle -,- \rangle$ induced from the canonical pairing between $\mf h$ and $\mf h^*$. In other words, the space of fields is given by
 \[ \umap(M_{\mr{dR}} \times  X_{\ol{\del}} \times Y_{\mr{Dol}}, T^*[d-1] B\mf h )= \Omega^\bullet(M) \otimes\Omega^{0,\bullet}(X) \otimes \Omega^{\bullet,\bullet}(Y)\otimes  \mf h[1]  \oplus \Omega^\bullet(M)\otimes  \Omega^{0,\bullet}(X) \otimes \Omega^{\bullet,\bullet}(Y)\otimes \mf h^*[d-2].  \]
Note that we can identify $\Omega^{0,\bullet}(X) \cong \Omega^{\dim X ,\bullet}(X) $ due to the existence of a volume form $\Omega_X$ on $X$.

If we write $\mc A\in  \Omega^\bullet(M)\otimes\Omega^{0,\bullet}(X)  \otimes \Omega^{\bullet,\bullet}(Y)\otimes \mf h[1]$ and $\mc B \in \Omega^\bullet(M)\otimes \Omega^{\dim X ,\bullet}(X) \otimes \Omega^{\bullet,\bullet}(Y)\otimes  \mf h^*[d-2]$, then the action functional is 
\[S(\mc A,\mc B) =  \int_{M\times X\times Y }   \langle \mc B , (d_{M}  +  \ol{\del}_X    + \ol{\del}_Y +  d_{\mf h} )\mc  A  \rangle +  \sum_{n\geq 2} \frac{1}{n!} \int_{M\times X\times Y } \langle \mc  B , \ell_n (\mc A,\cdots,\mc A )\rangle.  \]
In particular, if $X=Y=\pt$ and $(\mf h,[-,-])$ is the Lie algebra of a Lie group $H$, then the action functional simplifies to
\[ S(\mc A,\mc B) = \int_M \langle \mc B,  d_{M} \mc A \rangle + \frac{1}{2} \int_M \langle \mc B, [\mc A,\mc A] \rangle  = \int_M \langle \mc B, F_\mc A\rangle ,  \]
where $\mc A\in \Omega^\bullet(M, \mf h)[1]$, $\mc B \in \Omega^\bullet(M, \mf h^*)[d-2]$, and $F_{\mc A} = d_{M}\mc A + \frac 12 [\mc A,\mc A ]$ is the curvature of $\mc A$. This is a BV description of \emph{($d$-dimensional) (topological) BF theory with gauge group $H$} on $M$.  
\end{ex}

\subsubsection{} A primary goal of this section is to identify twists of pure supersymmetric Yang--Mills theories as a specific class of generalized Chern--Simons theories. For this purpose, we note that a cyclic $L_\infty$-algebra $\mf l$ can be constructed as a tensor product of a cyclic differential graded-commutative algebra $A$  and a cyclic $L_\infty$-algebra $\mf g$  (see \cite[(2.44), (2.45)]{JRSW} for an explicit description). We further specialize our focus to cases where $A$ has trivial differential and  $\mf g$ is the Lie algebra of a Lie group. In particular, $(\mf l=A \otimes \mf g, [-,-]_{\mf l}, \abracket{-,-}_{\mf l} )$ is a cyclic graded Lie algebra.

\begin{defn}\label{defn:enhancedCS}
Let $A$ be a graded-commutative algebra with a cyclic pairing. A \emph{Chern--Simons theory with gauge group $G$ enhanced by $A$ on $M\times X\times Y$} is defined as a generalized Chern--Simons theory with the cyclic $L_\infty$-algebra $\mf l= A \otimes  \mf g  $.
\end{defn}

\begin{rmk}\label{rmk:degree}
If the cyclic pairing of $A$ is of cohomological degree $k$, then the corresponding Chern--Simons theory enhanced by $A$ is defined on $\R^{d_M}\times \C^{d_X}\times \C^{d_Y}$ such that $m=d_M +  d_X + 2 d_Y =k+3$. Because a field theory under this condition has dimension $d = d_M+  2d_ X + 2 d_Y$, we find that $d$ satisfies the inequality $m\leq d\leq 2m $, and for each dimension $d$, there are $\left\lfloor \frac{2m - d}{2} \right\rfloor + 1$ possible theories. Therefore, if we let $T(m)$ denote the number of distinct theories\footnote{These theories share the same type of interactions but differ by their spacetime choices and dependence, as detailed for Chern--Simons theories in Example \ref{ex:CStheories}.} for a fixed $m\geq 1 $, then we have
\[T(m) = \begin{cases}
  \bracket{\frac{m}{2} +1 }^2  & \text{if }m\text{ is even},\\	
  \bracket{\frac{m+1}{2}}\bracket{ \frac{m+3}{2}}  & \text{if }m\text{ is odd}.\\	
 \end{cases}
 \]
For small values of $m$, we obtain $T(1)=2$, $T(2)=4$, $T(3)=6$ (which explains why there are six Chern--Simons theories in Example \ref{ex:CStheories}), $T(4)=9$, and $T(5)=12$.  
\end{rmk}

\begin{rmk}
In fact, if $\mf g= \mf{gl}(N)$, which is the main case of interest for us, an $L_\infty$-algebra structure can be derived from a cyclic $A_\infty$-algebra structure on $A \otimes \mf {gl}(N)$, where $A$ is a cyclic $A_\infty$-algebra (see  \cite{AmorimTuTensor}). We hope to investigate this case further in future work. 
\end{rmk}

\subsection{Examples from 4-dimensional Supersymmetric Yang--Mills Theory}\label{subsection:SUSY}

In this subsection, we express well-known examples of twists of supersymmetric field theories as Chern--Simons theories with gauge group $G$ enhanced by $A$. We focus on 4-dimensional holomorphic examples. For further details, see \cite{CostelloSUSY, ESW}.

\begin{ex}[holomorphic twist of 4-dimensional $\mc N=1$ pure gauge theory]\label{ex:4dN=1} The space of fields of the holomorphic twist of 4-dimensional $\mc N=1$ pure gauge theory with gauge group $G$ on $\C^2$ is
\[ \Omega^{0,\bullet}( \C^2, \mf g )[1] \oplus \Omega^{2,\bullet}(\C^2 ,\mf g^* ),  \]
and the action functional is $ S(\mc A,\mc B ) =  \int _{\C^2 } \abracket{\mc B , \ol{\del} \mc A   + \frac{1}{2} [\mc A,\mc A]    }$, where $\mc A \in \Omega^{0,\bullet}( \C^2, \mf g )[1] $, $\mc B \in  \Omega^{2,\bullet}(\C^2 ,\mf g^* ) $, and $\abracket{-,-}$ denotes the canonical pairing between the Lie algebra $\mf g$ of $G$ and its dual $\mf g^*$.  This is equivalent to (2-dimensional) \emph{holomorphic BF theory} on $\C^2$, corresponding to Example \ref{ex:BF} with $X=\C^2$, $M=Y=\pt$. For more details, see \cite[Section 16]{CostelloSUSY}.

Choosing a holomorphic volume form on $\C^2$, the space of fields can be rewritten as $\Omega^{0,\bullet}( \C^2, \mf g  \oplus\mf g^*[-1] ) [1]$. By choosing a non-degenerate symmetric bilinear form on $\mf g$, we can further rewrite this as 
\[   \Omega^{0,\bullet}( \C^2, \mf g \oplus \mf g[-1] ) [1] \cong    \Omega^{0,\bullet}( \C^2, A \otimes \mf g ) [1] ,  \]
where $A=\C[\eps]$ is a graded-commutative algebra with a variable $\eps$ of cohomological degree $1$ and a trace map $\tr \colon A\to \C$ of degree $-1$, defined by $\tr(\eps)=1$. Thus, the theory is a Chern--Simons theory on $\C^2_{\ol{\del}}$ enhanced by $A=\C[\eps]$.  
\end{ex}

\begin{ex}[holomorphic twist of 4-dimensional $\mc N=2$ pure gauge theory] The space of fields of the holomorphic twist of 4-dimensional $\mc N=2$ pure gauge theory with gauge group $G$ is obtained by replacing $\mf g$ in the $\mc N=1$ theory with $\mf g[\theta ]$, where $\theta$ is a variable of cohomological degree $-1$.  Hence the space of fields is 
\[ \Omega^{0,\bullet}( \C^2, \mf g[\theta] )[1] \oplus \Omega^{2,\bullet}(\C^2  ,(\mf g[\theta])^* ),  \]
and the action functional is $S(\mc A,\mc B ) =  \int _{\C^2 } \abracket{\mc B , \ol{\del} \mc A   + \frac{1}{2} [\mc A,\mc A]    }$, 
where $\mc A \in \Omega^{0,\bullet}( \C^2, \mf g [\theta] )[1] $, $\mc B \in  \Omega^{2,\bullet}(\C^2 ,(\mf g[\theta])^* ) $, and $\abracket{-,-}$ is defined using the pairing between $\mf g$ and $\mf g^*$ and the trace map $\C[\theta]\to \C$ given by $\theta \mapsto 1$. For further details, see \cite{CostelloSUSY}.

Choosing a holomorphic volume form on $\C^2$ and a symmetric bilinear form on $\mf g$, the space of fields can be rewritten as
\[ \Omega^{0,\bullet}( \C^2, \mf g [\theta]  \oplus\mf g^*[-2][\theta] ) [1]   \cong  \Omega^{0,\bullet}( \C^2, A \otimes \mf g ) [1]   \]
where  $A = \C[\theta, \delta ]/(\delta^2)$ is a graded-commutative algebra with an even variable $\delta$ of cohomological degree 2 and a trace map $\tr\colon A\to \C$ of degree $-1$, defined by $\theta \delta\mapsto 1$. Thus, the theory is a Chern--Simons theory on $\C^2_{\ol{\del}}$ enhanced by $A=\C[\eps,\delta]/(\delta^2)$. 
 
\end{ex}

\begin{ex}[holomorphic twist of 4-dimensional $\mc N=4$ gauge theory]	The space of fields of the holomorphic twist of 4-dimensional $\mc N=4$ gauge theory with gauge group $G$ is obtained by replacing $\mf g$ in the $\mc N=1$ theory with $\mf g[\eps_1,\eps_2]$, where $|\eps_i|=1$.  Hence, the space of fields is
\[ \Omega^{0,\bullet}( \C^2, \mf g[\eps_1,\eps_2] )[1] \oplus \Omega^{2,\bullet}(\C^2  ,(\mf g[\eps_1,\eps_2])^* ),  \]
and the action functional is $S(\mc A,\mc B ) =  \int _{\C^2 } \abracket{\mc B , \ol{\del} \mc A   + \frac{1}{2} [\mc A,\mc A]    }$, where $\mc A \in \Omega^{0,\bullet}( \C^2, \mf g [\eps_1,\eps_2] )[1] $, $\mc B \in  \Omega^{2,\bullet}(\C^2 ,(\mf g[\eps_1,\eps_2])^* ) $, and $\abracket{-,-}$ is defined using the pairing between $\mf g$ and $\mf g^*$ and the trace map $\C[\eps_1,\eps_2]\to \C$ given by $\eps_1\eps_2 \mapsto 1$. For more details, see \cite{CostelloSUSY, EY1}.

Unlike the $\mc N=1$ and $\mc N=2$ cases, this theory can be expressed as a Chern--Simons theory without requiring a choice of a holomorphic volume form on $\C^2$. That is, identifying $\eps_i$ as $dz_i$ and choosing a symmetric bilinear form on $\mf g$, the space of fields can be rewritten as
\[  \Omega^{\bullet ,\bullet}( \C^2, \mf g  \oplus\mf g^*[1]  ) [1] \cong  \Omega^{\bullet ,\bullet}( \C^2, A \otimes \mf g  ) [1],  \]
where $A = \C[\theta ]$ is a graded-commutative algebra with a trace map of degree $1$, defined by $\theta \mapsto 1$. Thus, the theory is a Chern--Simons theory on $\C^2_{\mr{Dol}}$ enhanced by $A=\C[\theta]$.
 
\end{ex}

\subsection{Supersymmetric Yang--Mills and Chern--Simons Theories}

In this subsection, we show that the observation from the previous subsection is not a mere coincidence; indeed, we prove that every twist of pure supersymmetric Yang--Mills theory can be described as a generalized Chern--Simons theory.

\begin{prop}\label{prop:twistedSYMisCS}
	Every twist of pure supersymmetric Yang--Mills theory is a Chern--Simons theory enhanced by a cyclic graded-commutative algebra, as summarized in the following tables.
\end{prop}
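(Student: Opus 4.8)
The plan is to proceed from the known BV descriptions of twisted pure SYM and to exhibit, for each entry of the tables, an explicit cyclic graded-commutative algebra $A$ together with a spacetime factorization $M\times X\times Y$ realizing the twist as the enhanced Chern--Simons theory of Definition \ref{defn:enhancedCS}. The examples of Subsection \ref{subsection:SUSY} already supply the template: in every holomorphic case the twisted theory is a (generalized) holomorphic BF theory, whose space of fields is $\Omega^\bullet(M)\otimes \Omega^{0,\bullet}(X)\otimes \Omega^{\bullet,\bullet}(Y)\otimes (\mf g\oplus \mf g^*[k])[1]$ for a suitable shift $k$, and the first step is to fold the $\mf g$-- and $\mf g^*$--valued fields into a single $A\otimes \mf g$--valued field by choosing a holomorphic volume form on $X$ and a non-degenerate invariant pairing on $\mf g$.

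First I would organize the classification of \cite{ESW21, ESW} by the three invariants recorded in the tables --- the dimension $d$, the amount of supersymmetry $\mc N$, and the choice of square-zero supercharge (the ``twist'') --- and recall that each twist of pure SYM has an explicit BV description in which the fields are differential forms valued in $\mf g$ and a finite-dimensional graded vector space arising from the directions transverse to the support of the twist together with the R--symmetry weights. The second step is to identify this graded vector space, equipped with its induced multiplication and the trace pairing coming from the BV antibracket, as the cyclic graded-commutative algebra $A$; the factorization into de Rham directions $M$, Calabi--Yau $\ol{\del}$--directions $X$, and Dolbeault directions $Y$ is then read off from how many directions the supercharge renders topological versus holomorphic.

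A more uniform route, which I would use to avoid a purely case-by-case check, exploits the fact that each family of pure SYM theories descends by dimensional reduction from its top-dimensional representative --- ten-dimensional $\mc N=1$ SYM for the maximally supersymmetric family, and the six- and four-dimensional minimal theories for the families with eight and four supercharges --- whose holomorphic twists are (enhanced) Chern--Simons theories, the ten-dimensional case being holomorphic Chern--Simons on $\C^5$ with $A=\C$ (the top entry of Table \ref{table1}). Dimensional reduction along a coordinate subspace introduces polynomial or exterior generators into $A$ and trades $\ol{\del}$--directions for $Y$-- or $M$--factors, while passing to a non-minimal twist corresponds to deforming the differential so as to convert further holomorphic directions into topological ones; I would track the latter as a corresponding modification of $A$ --- for instance the passage from $\C[\eps]$ to $\C[\eps,\delta]/(\delta^2)$ --- and of the factorization $M\times X\times Y$. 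Checking that each resulting trace has the cohomological degree demanded by Remark \ref{rmk:degree} then confirms that the entry defines a genuine enhanced Chern--Simons theory, and explains why a single algebra $A$ appears in several rows with distinct spacetimes.

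The step I expect to be the main obstacle is verifying this form for the partially and fully topological twists rather than the purely holomorphic ones. For a holomorphic twist the enhanced--CS form is essentially immediate, as the examples show; but for a twist with topological directions one must check that the de Rham complex appears with the correct multiplicative and pairing structure, that the R--symmetry data still assembles into a \emph{graded-commutative} algebra --- not merely a graded vector space --- carrying a genuinely non-degenerate cyclic trace, and that the degree of that trace is consistent across the different spacetime factorizations sharing a common $A$. Making these gradings and the cyclic structure line up simultaneously for all entries of the four tables is where the bulk of the careful bookkeeping will lie.
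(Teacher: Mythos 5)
Your first, case-by-case route is in substance the paper's own proof, transposed from stacks to BV complexes: the paper also starts from the ESW classification, where each twist comes with a description as a mapping stack, a shifted cotangent of one, or a mapping stack into $0\sslash \mf g$ or $\mf g/\mf g$, and rewrites these using the identities $T^*[k]\umap(\mc Y,B\mf l)\simeq \umap(\mc Y, T^*[k+m]B\mf l)$ for $m$-oriented $\mc Y$ (this is your choice of holomorphic volume form), $T^*[2k+1]B\mf g\cong B(\C[\eps]\otimes\mf g)$ and $0\sslash\mf g\cong B(\C[\delta]/(\delta^2)\otimes\mf g)$ (this is your folding of $\mf g\oplus\mf g^*[k]$ via the invariant pairing), together with $\mf g/\mf g\cong B(\C[\eps]\otimes\mf g)$, sorting all table entries into nine types. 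Incidentally, the step you flag as the main obstacle --- the partially and fully topological twists --- is not an obstacle in this framework: in Definition \ref{defn:genCS} the de Rham factor $M_{\mr{dR}}$ enters on exactly the same footing as $X_{\ol{\del}}$ and $Y_{\mr{Dol}}$, so no separate verification of the multiplicative or cyclic structure is needed there; the genuine case analysis lives entirely in the target ($B\mf g$, $0\sslash\mf g$, $\mf g/\mf g$, their shifted cotangents, and de Rham stacks).

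The genuine gap is the impure/generic twists --- the rows with $A=\bigl(\C[\eps],\frac{\del}{\del\eps}\bigr)$ and its relatives, the paper's type \circled{9}, which appear in every table except Table \ref{table3_CS} and make up roughly a third of all entries. These fit neither of your mechanisms. They are not of BF form to be folded, and they are not obtained by ``converting holomorphic directions into topological ones'': for the impure $(1,0)$ twist in dimension $8$, for instance, the spacetime stays $\C^4_{\ol{\del}}$, and what changes is that the differential on $A=\C[\eps]$ becomes $\frac{\del}{\del\eps}$, rendering $A$ acyclic. The paper handles these by observing that the corresponding ESW description is a de Rham stack (of a mapping stack, or with a de Rham target), whose $L_\infty$-algebra is quasi-isomorphic to zero, while a Chern--Simons theory enhanced by an acyclic $A$ is likewise perturbatively trivial, so the two sides agree for trivial reasons. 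Without this ingredient your argument establishes the proposition only for the pure twists. Your alternative dimensional-reduction route has the same omission (impure twists are not reductions of pure ones), and it additionally relies on an unproved compatibility --- that the compactification of a twist agrees with the twist of the compactified theory --- so at minimum it would need that lemma plus the triviality argument above to be complete.
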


To elaborate, we use the classification result from \cite{ESW}, which completely characterizes the twists of pure supersymmetric Yang--Mills theory. A twist is specified by a triple $(d,\mc N,\text{twist})$, where $d$ is the spacetime dimension, $\mc N$ is the number of supersymmetries, and the twist is determined by the rank of the supercharge used. This data uniquely characterizes each twist, whose description is summarized in the first four columns of the table below.\footnote{In \cite{ESW}, a section stack is used, but in our context, it coincides with a mapping stack.} The content of the proposition is that each such twist can be reformulated as a Chern--Simons theory on the given spacetime, enhanced by a cyclic graded-commutative algebra $A$, according to the type as specified in the table and explained in the proof. In the table, $\eps,\eps_1,\eps_2,\eps'$ denote odd variables and $\delta$ an even variable.
\begin{table}[htbp]
\centering
\begin{tabular}{c|c|c|c||c|c|c}
\(d\) & \(\mathcal{N}\) & twist & description & spacetime & \(A\) & type \\ \hline
10 & $(1,0)$ & $(1,0)$ & $\umap(\C^5_{\ol{\del}} , B\mf g )$  & $\C^5_{\ol{\del}}$ & $\C$ & \circled{1} \\ \hline
9  & $1$ & $1$ & $\umap(\R_{\mr{dR}}\times  \C^4_{\ol{\del}} , B\mf g )$  & $\R_{\mr{dR}}\times \C^4_{\ol{\del}}$ & $\C$  & \circled{1} \\ \hline
\multirow{3}{*}{8}  & \multirow{3}{*}{$1$} & $(1,0)$ pure &  $T^*[-1]\umap(\C^4_{\ol{\del}} , B\mf g )$  & $\C^4_{\ol{\del}}$ & $\C[\eps]$  & \circled{2} \\ 
    \cline{3-7}
   &     & $(1,1)$      & $\umap(\R_{\mr{dR}}^2\times  \C^3_{\ol{\del}} , B\mf g )$   & $\R^2_{\mr{dR}}\times \C^3_{\ol{\del}}$ & $\C$  & \circled{1}  \\ 
    \cline{3-7}
   &     & $(1,0)$ impure & $\umap(\C^4_{\ol{\del}} , B\mf g )_{\mr{dR}}$  & $\C^4_{\ol{\del}}$ & $ (\C[\eps],\,\frac{\del}{\del\eps} )$  & \circled{9} \\ \hline
\multirow{3}{*}{7}  & \multirow{3}{*}{$1$} & $1$ pure & $T^*[-1]\umap(\R_{\mr{dR}}\times  \C^3_{\ol{\del}} , B\mf g )$  & $\R_{\mr{dR}}\times \C^3_{\ol{\del}}$ & $\C[\eps]$  & \circled{2} \\ 
    \cline{3-7}
   &     & $2$ & $\umap(\R_{\mr{dR}}^3\times  \C^2_{\ol{\del}} , B\mf g )$   & $\R^3_{\mr{dR}}\times \C^2_{\ol{\del}}$ & $\C$  & \circled{1} \\ 
    \cline{3-7}
   &     & $1$ impure &$\umap(\R_{\mr{dR}}\times  \C^3_{\ol{\del}} , B\mf g )_{\mr{dR}}$   & $\R_{\mr{dR}}\times \C^3_{\ol{\del}}$ & $ (\C[\eps],\,\frac{\del}{\del\eps} )$ & \circled{9} \\ \hline
\multirow{4}{*}{6}  & \multirow{4}{*}{$(1,1)$} & $(1,0)$ &  $T^*[-1]\umap(  \C^3_{\ol{\del}} , \mf g/\mf g )$   & $\C^3_{\ol{\del}}$ & $\C[\eps_1,\eps_2]$  & \circled{7}\\ 
     \cline{3-7}
  &         & $(1,1)$ special & $T^*[-1]\umap(\R_{\mr{dR}}^2\times  \C^2_{\ol{\del}} , B\mf g )$   & $\R^2_{\mr{dR}}\times \C^2_{\ol{\del}}$ & $\C[\eps]$ & \circled{2} \\ 
     \cline{3-7}
  &         & $(2,2)$ & $\umap(\R_{\mr{dR}}^4\times  \C_{\ol{\del}} , B\mf g )$  & $\R^4_{\mr{dR}}\times \C_{\ol{\del}}$ & $\C$  & \circled{1}\\ 
    \cline{3-7}
   &         & $(1,1)$ generic & $\umap(\R_{\mr{dR}}^2\times  \C^2_{\ol{\del}} , B\mf g )_{\mr{dR}}$   & $\R^2_{\mr{dR}}\times \C^2_{\ol{\del}}$ & $ (\C[\eps],\,\frac{\del}{\del\eps} )$ & \circled{9} \\ \hline
\multirow{4}{*}{5}  & \multirow{4}{*}{$2$} & $1$ & $T^*[-1]\umap(\R_{\mr{dR}}\times  \C^2_{\ol{\del}} , \mf g/\mf g )$  & $\R_{\mr{dR}}\times \C^2_{\ol{\del}}$ & $\C[\eps_1,\eps_2]$  & \circled{7} \\ 
    \cline{3-7}
   &     & $2$ special & $T^*[-1]\umap(\R_{\mr{dR}}^3\times  \C_{\ol{\del}} , B\mf g )$ & $\R^3_{\mr{dR}}\times \C_{\ol{\del}}$ & $\C[\eps]$  & \circled{2} \\ 
    \cline{3-7}
   &     & $4$ & $\umap(\R_{\mr{dR}}^5 , B\mf g )$   & $\R^5_{\mr{dR}}$ & $\C$  & \circled{1} \\ 
    \cline{3-7}
   &     & $2$ generic & $\umap(\R_{\mr{dR}}^3\times  \C_{\ol{\del}} , B\mf g )_{\mr{dR}}$   & $\R^3_{\mr{dR}}\times \C_{\ol{\del}}$ & $ (\C[\eps],\,\frac{\del}{\del\eps} )$  & \circled{9} \\ \hline
\multirow{6}{*}{4}  & \multirow{6}{*}{$4$} & $(1,0)$ & $T^*[-1]\umap( \C^2_{\mr{Dol}} , B\mf g )$   & $\C^2_{\mr{Dol}}$ & $\C[\eps]$  & \circled{2} \\ 
    \cline{3-7}
   &     & $(1,1)$ & $T^*[-1]\umap(\R_{\mr{dR}}^2\times  \C_{\mr{Dol}} , B\mf g )$   & $\R^2_{\mr{dR}}\times \C_{\mr{Dol}}$ & $\C[\eps]$  & \circled{2} \\ 
    \cline{3-7}
   &     & $(2,2)$ special& $T^*[-1]\umap(\R_{\mr{dR}}^4 , B\mf g )$   & $\R^4_{\mr{dR}}$ & $\C[\eps]$  & \circled{2} \\ 
    \cline{3-7}
   &     & $(2,0)$ &   $\umap(\C^2_{\mr{Dol}} , B\mf g )_{\mr{dR}}$ &   $\C^2_{\mr{Dol}}$  & $ (\C[\eps],\,\frac{\del}{\del\eps} )$  & \circled{9}\\ 
    \cline{3-7}
   &     & $(2,1)$ &$\umap(\R_{\mr{dR}}^2\times  \C_{\mr{Dol}} , B\mf g )_{\mr{dR}} $ & $\R^2_{\mr{dR}}\times \C_{\mr{Dol}}$ & $ (\C[\eps],\,\frac{\del}{\del\eps} )$  & \circled{9}\\ 
    \cline{3-7}
   &     & $(2,2)$ generic & $\umap(\R_{\mr{dR}}^4, B\mf g )_{\mr{dR}}$   & $\R^4_{\mr{dR}}$ & $ (\C[\eps],\,\frac{\del}{\del\eps} )$  & \circled{9} \\ \hline
\multirow{3}{*}{3}  & \multirow{3}{*}{$8$} & $1$ & $T^*[-1]\umap(\R_{\mr{dR}}\times  \C_{\mr{Dol}} , \mf g/\mf g )$   & $\R_{\mr{dR}}\times \C_{\mr{Dol} }$ & $\C[\eps,\eps']$ & \circled{7} \\ 
    \cline{3-7}
   &     & $2$ (B) & $T^*[-1]\umap(\R_{\mr{dR}}^3 , \mf g/\mf g )$ & $\R^3_{\mr{dR}}$ & $\C[\eps,\eps']$ & \circled{7} \\ 
    \cline{3-7}
   &     & $2$ (A) & $ \umap(\R_{\mr{dR}}^3 , \mf g/\mf g )_{\mr{dR}}$ & $\R^3_{\mr{dR}}$ & $ (\C[\eps,\eps'],\,\frac{\del}{\del\eps} )$  & \circled{9}\\ \hline
\end{tabular}
\caption{Twists of maximally supersymmetric Yang--Mills as Chern--Simons enhanced by $A$}
\label{table1_CS}
\end{table}

\begin{table}
\centering
\begin{tabular}{c|c|c|c||c|c|c}
\(d\) & \(\mathcal{N}\) & twist & description  & spacetime & \(A\) & type \\ \hline
\multirow{1}{*}{6}  
   & \multirow{1}{*}{\((1,0)\)} & \((1,0)\) pure & $\umap( \C^3 _{\ol{\del}} , 0\sslash \mf g )$ & \(\C^3_{\ol{\del}}\) & \(\C[\delta]/(\delta^2)\)  & \circled{4} \\ \hline
\multirow{1}{*}{5}  
   & \multirow{1}{*}{\(1\)}   & \(1\) &  $\umap( \R_{\mr{dR}}\times  \C^2 _{\ol{\del}} , 0\sslash \mf g )$  & \(\R_{\mr{dR}} \times \C^2_{\ol{\del}}\) & \(\C[\delta]/(\delta^2)\) & \circled{4}  \\ \hline
\multirow{3}{*}{4} 
   & \multirow{3}{*}{\(2\)} & \((1,0)\) &  $T^*[-1]\umap( \C^2 _{\ol{\del}} , 0\sslash \mf g )$  & \(\C^2_{\ol{\del}}\) & \(\C[\eps,\delta]/(\delta^2)\) & \circled{5}  \\ 
   \cline{3-7}
   &                         & \((1,1)\) &   $\umap(\R^2_{\mr{dR}}\times  \C _{\ol{\del}} , 0\sslash \mf g )$  & \(\R^2_{\mr{dR}} \times \C_{\ol{\del}}\) & \(\C[\delta]/(\delta^2)\) & \circled{4}  \\
    \cline{3-7}
   &                         & \((2,0)\) &   $\umap( \C^2 _{\ol{\del}} , 0\sslash \mf g )_{\mr{dR}}$  & \(\C^2_{\ol{\del}}\) & \( (\C[\eps,\delta]/(\delta^2),\,\frac{\del}{\del \eps} )\)& \circled{9}  \\ \hline
\multirow{3}{*}{3} 
   & \multirow{3}{*}{\(4\)}  
   & \(1\) &   $T^*[-1]\umap( \R_{\mr{dR}}\times \C _{\ol{\del}} , 0\sslash \mf g )$  & \(\R_{\mr{dR}} \times \C_{\ol{\del}}\) & \(\C[\eps,\delta]/(\delta^2)\) & \circled{5}  \\
    \cline{3-7}
   &                         & \(2\) (B) &   $\umap( \R^3 _{\mr{dR}} , 0\sslash \mf g )$  & \(\R^3_{\mr{dR}}\) & \(\C[\delta]/(\delta^2)\) & \circled{4}  \\ 
   \cline{3-7}
   &                         & \(2\) (A) &  $\umap( \R_{\mr{dR}}\times \C _{\ol{\del}} , 0\sslash \mf g )_{\mr{dR}}$ & \(\R_{\mr{dR}} \times \C_{\ol{\del}}\) & \( (\C[\eps,\delta]/(\delta^2),\,\frac{\del}{\del \eps} )\)& \circled{9}  \\ \hline
\end{tabular}
\caption{Twists of super Yang--Mills with 8 supercharges as Chern--Simons enhanced by $A$}
\label{table2_CS}
\end{table}

\begin{table}
\centering
\begin{tabular}{c|c|c|c||c|c|c}
\(d\) & \(\mathcal{N}\) & twist & description & spacetime & \(A\)   & type \\ \hline
\multirow{1}{*}{4}  
   & \multirow{1}{*}{\(1\)} & \((1,0)\) & $T^*[-1]\umap( \C^2_{\ol{\del}} , B\mf g )$ & \(\C^2_{\ol{\del}}\) & \(\C[\eps]\) & \circled{2}  \\
    \hline
 {3}  &   {\(2\)} & \(1\) & $T^*[-1]\umap(\R_{\mr{dR}} \times  \C_{\ol{\del}} , B\mf g )$ & \(\R_{\mr{dR}} \times \C_{\ol{\del}}\) & \(\C[\eps]\) & \circled{2} \\ 
   \hline 
\end{tabular}
\caption{Twists of super Yang--Mills with 4 supercharges as Chern--Simons enhanced by $A$}
\label{table3_CS}
\end{table}

\begin{table}
\centering
\begin{tabular}{c|c|c||c|c|c}
\(\mathcal{N}\) & twist & description  & spacetime & \(A\)& type  \\ \hline
\multirow{3}{*}{\((4,4)\)} 
  & \((1,0)\) & $T^*[-1]\umap( \C_{\ol{\del}} , T[1]( 0 \sslash \mf g ) )$ & \(\C_{\ol{\del}}\) & \(\C[\eps,\eps',\delta]/(\delta^2)\)  & \circled{8}\\
   \cline{2-6}
  & \((1,1)\) (B) & $T^*[-1]\umap( \R^2_{\mr{dR}},  0\sslash \mf g )$ & \(\R^2_{\mr{dR}}\) & \(\C[\eps,\delta]/(\delta^2)\) & \circled{5} \\ \cline{2-6}
  & \((1,1)\) (A) & $\umap( \R^2_{\mr{dR}},  (0\sslash \mf g)_{\mr{dR}} )$ & \(\R^2_{\mr{dR}}\) & \( (\C[\eps,\delta]/(\delta^2),\,\frac{\del}{\del \eps} )\)  & \circled{9} \\ \hline
\multirow{3}{*}{\((2,2)\)}
  & \((1,0)\) & $T^*[-1]\umap( \C_{\ol{\del}} , T[1] B \mf g  )$  & \(\C_{\ol{\del}}\) & \(\C[\eps,\eps']\) & \circled{8} \\
   \cline{2-6}
  & \((1,1)\) (B) & $T^*[-1]\umap( \R^2_{\mr{dR}} , B \mf g  )$ & \(\R^2_{\mr{dR}}\) & \(\C[\eps']\)  & \circled{2} \\ 
  \cline{2-6}
  & \((1,1)\) (A) & $T^*[-1]\umap( \C_{\ol{\del}} ,  (B \mf g)_{\mr{dR}}  )$ & \(\C_{\ol{\del}}\) & \( (\C[\eps,\eps'],\,\frac{\del}{\del \eps} )\) & \circled{9} \\ \hline
  $(4,0)$ & $(1,0)$ &   $T^*[-1]\umap( \C_{\ol{\del}} ,  0\sslash \mf g  )$ & \(\C_{\ol{\del}}\) & \(\C[\delta_1,\delta_2]/(\delta^2_1,\delta_2^2)\)  & \circled{6} \\ \hline
  $(2,0)$ & $(1,0)$ &   $T^*[-1]\umap( \C_{\ol{\del}} ,  B \mf g  )$ & \(\C_{\ol{\del}}\) & \(\C[\delta]/(\delta^2)\)  & \circled{3} \\ \hline
\end{tabular}
\caption{Twists of super Yang--Mills in two dimensions as Chern--Simons enhanced by $A$}
\label{table4_CS}
\end{table}

\FloatBarrier

\begin{proof}
Generalized Chern--Simons theories are defined by maps into the classifying stack of a cyclic $L_\infty$-algebra. Hence, we must recast both cotangent stacks of mapping stacks and mapping stacks into symplectic reductions in this same language. To that end, we make the following observations from derived algebraic geometry:
 \begin{itemize}
	\item [(a)] If $\mc Y$ is $m$-oriented and $\mf l$ is an $L_\infty$-algebra, then there exists an equivalence \[ T^*[k] \umap(\mc Y,B \mf l )\simeq \umap( \mc Y, T^*[k+m]B \mf l ) \] as shown in  \cite[Corollary 1.12]{ESW}.
	\item [(b)] If $B\mf g$ is 2-shifted symplectic (e.g., if $\mf g$ is the Lie algebra of a reductive group $G$), then we have \[T^*[1]B\mf g\cong T[-1]B\mf g  \cong  B(\C[\eps]  \otimes  \mf g )\] where $\eps$ is a variable of cohomological degree 1. More generally, if $B\mf l$ is shifted symplectic of even degree (e.g., if $\mf l =A \otimes \mf g$ where $A$ is a graded-commutative algebra with a cyclic trace map of even degree), then we have \[T^*[2k+1]B\mf l \cong T[2l+1]B\mf l \cong B( \C[\eps]\otimes \mf l)\] in a $\Z/2\Z$-graded sense for any integers $k,l$.
	\item [(c)]  Since there is an isomorphism between the derived symplectic reduction $(T^*\mc X )\sslash G: = \R \mu^{-1}(0)/G$ and the cotangent stack $ T^*(\mc X /G) $ for a derived stack $\mc X$ and an affine group scheme $G$, it follows that, if $B\mf g$ is 2-shifted symplectic, then we have \[0\sslash \mf g \cong T^*B\mf g \cong T[-2]B\mf g  \cong B \left (\C[\delta]/(\delta^2) \otimes \mf g \right )  \]
	 where $\delta$ is a variable of cohomological degree 2.	 More generally, if $B\mf l$ is shifted symplectic of even degree, then we have \[T^*[2k]B\mf l \cong T[2l]B\mf l \cong B\left ( \C[\delta] / (\delta^2) \otimes \mf l\right )\] in a $\Z/2\Z$-graded sense for any integers $k,l$. On the other hand, if $B\mf l$ is shifted symplectic of odd degree, then we have \[T^*[2k]B\mf l \cong T[2l-1]B\mf l \cong B\left ( \C[\eps] \otimes \mf l\right )\] in a $\Z/2\Z$-graded sense for any integers $k,l$. 

	\item [(d)] Since there is an isomorphism $G/G \cong \umap(S^1_{\text{B}},BG)$ for an algebraic group $G$, we also have \[\mf g/\mf g \cong \umap(S^1_{\text{B}},B\mf g) \cong B( \C[\eps]\otimes \mf g  )  \]
	where $\eps$ is a variable of cohomological degree 1.
	\item [(e)] The complex $\mf g_{\mr{dR}}$ is defined as
	\[\xymatrix@R-20pt{
\ul{-1}& \ul{0}\\
\mf g \ar[r]^-{\id} & \mf g .
	}\] 
	Hence, we have an isomorphism $\mf g_{\mr{dR}}\cong A \otimes \mf g$, where $A=\left (\C[\theta ] , \frac{\del}{\del \theta } \right )$ with a variable $\theta$ of cohomological degree $-1$. In a $\Z/2\Z$-graded setting, we write $\eps$ instead of $\theta$.	
		\end{itemize}
Every description in the tables falls into one of the following nine types, each obtained by combining the observations above:
\begin{enumerate}
	\item [\circled{1}] \fbox{If $\umap(M_{\mr{dR}}\times X_{\ol{\del}}\times Y_{\mr{Dol}},B\mf g )$:} These are by definition Chern--Simons theories on $M_{\mr{dR}}\times X_{\ol{\del}}\times Y_{\mr{Dol}}$ with gauge group $G$ enhanced by $A=\C$.
	\item [\circled{2}] \fbox{If $T^*[-1]\umap(M_{\mr{dR}}\times X_{\ol{\del}}\times Y_{\mr{Dol}},B\mf g )$, where $M_{\mr{dR}}\times X_{\ol{\del}}\times Y_{\mr{Dol}}$ is $2k$-oriented:} We have
	\begin{align*}
&T^*[-1]\umap(M_{\mr{dR}}\times X_{\ol{\del}}\times Y_{\mr{Dol}},B\mf g ) \stackrel{\text{(a)}}{\cong}  \umap(M_{\mr{dR}}\times X_{\ol{\del}}\times Y_{\mr{Dol}}, T^*[2k-1] B\mf g )\\
 & \stackrel{\text{(b)}}{\cong} \umap(M_{\mr{dR}}\times X_{\ol{\del}}\times Y_{\mr{Dol}},  B(\C[\eps] \otimes \mf g )) 
 	\end{align*}
Hence, these are Chern--Simons theories on $M_{\mr{dR}}\times X_{\ol{\del}}\times Y_{\mr{Dol}}$ enhanced by $A=\C[\eps]$.
	\item [\circled{3}] \fbox{If $T^*[-1]\umap(M_{\mr{dR}}\times X_{\ol{\del}}\times Y_{\mr{Dol}},B\mf g )$, where $M_{\mr{dR}}\times X_{\ol{\del}}\times Y_{\mr{Dol}}$ is $(2k+1)$-oriented:} We have
	\begin{align*}
&T^*[-1]\umap(M_{\mr{dR}}\times X_{\ol{\del}}\times Y_{\mr{Dol}},B\mf g ) \stackrel{\text{(a)}}{\cong}  \umap(M_{\mr{dR}}\times X_{\ol{\del}}\times Y_{\mr{Dol}}, T^*[2k] B\mf g )\\
 & \stackrel{\text{(c)}}{\cong} \umap \left (M_{\mr{dR}}\times X_{\ol{\del}}\times Y_{\mr{Dol}},  B(\C[\delta]/(\delta^2) \otimes \mf g ) \right ) 
 	\end{align*}
Hence, these are Chern--Simons theories on $M_{\mr{dR}}\times X_{\ol{\del}}\times Y_{\mr{Dol}}$ enhanced by $A=\C[\delta]/(\delta^2)$.
	\item [\circled{4}] \fbox{If $\umap(M_{\mr{dR}}\times X_{\ol{\del}}\times Y_{\mr{Dol}}, 0\sslash \mf g )$:} We have 
	\[  \umap(M_{\mr{dR}}\times X_{\ol{\del}}\times Y_{\mr{Dol}}, 0\sslash \mf g )\stackrel{\text{(c)}}{\cong}  \umap\left (M_{\mr{dR}}\times X_{\ol{\del}}\times Y_{\mr{Dol}},B    (\C[\delta]/(\delta^2) \otimes \mf g   )  \right ) \]
Hence, these are Chern--Simons theories on $M_{\mr{dR}}\times X_{\ol{\del}}\times Y_{\mr{Dol}}$ enhanced by $A=\C[\delta]/(\delta^2) $.
	\item [\circled{5}] \fbox{If $T^*[-1]\umap(M_{\mr{dR}}\times X_{\ol{\del}}\times Y_{\mr{Dol}}, 0\sslash \mf g )$, where $M_{\mr{dR}}\times X_{\ol{\del}}\times Y_{\mr{Dol}}$ is $2k$-oriented:} We have 
	\begin{align*}
&T^*[-1]\umap(M_{\mr{dR}}\times X_{\ol{\del}}\times Y_{\mr{Dol}},0\sslash \mf g ) \stackrel{\text{(a),(c) } }{\cong}  \umap\left (M_{\mr{dR}}\times X_{\ol{\del}}\times Y_{\mr{Dol}}, T^*[2k-1]B (\C[\delta]/(\delta^2) \otimes \mf g  )  \right )\\
 & \stackrel{\text{(b)}}{\cong} \umap\left (M_{\mr{dR}}\times X_{\ol{\del}}\times Y_{\mr{Dol}},  B(\C[\eps,\delta]/(\delta^2) \otimes \mf g ) \right  ) 
 	\end{align*}
Hence, these are Chern--Simons theories on  $M_{\mr{dR}}\times X_{\ol{\del}}\times Y_{\mr{Dol}}$ enhanced by $A=\C[\eps,\delta]/(\delta^2)$.
	\item [\circled{6}] \fbox{If $T^*[-1]\umap(M_{\mr{dR}}\times X_{\ol{\del}}\times Y_{\mr{Dol}}, 0\sslash \mf g )$, where $M_{\mr{dR}}\times X_{\ol{\del}}\times Y_{\mr{Dol}}$ is $(2k+1)$-oriented:} We have 
	\begin{align*}
&T^*[-1]\umap(M_{\mr{dR}}\times X_{\ol{\del}}\times Y_{\mr{Dol}},0\sslash \mf g ) \stackrel{\text{(a),(c) } }{\cong}  \umap\left (M_{\mr{dR}}\times X_{\ol{\del}}\times Y_{\mr{Dol}}, T^*[2k]B (\C[\delta]/(\delta^2) \otimes \mf g  )  \right )\\
 & \stackrel{\text{(c)}}{\cong} \umap\left (M_{\mr{dR}}\times X_{\ol{\del}}\times Y_{\mr{Dol}},  B(\C[\delta_1,\delta_2]/(\delta_1^2,\delta_2^2) \otimes \mf g ) \right  ) 
 	\end{align*}
Hence, these are Chern--Simons theories on  $M_{\mr{dR}}\times X_{\ol{\del}}\times Y_{\mr{Dol}}$ enhanced by $A=\C[\delta_1,\delta_2]/( \delta_1^2,\delta_2^2)$.
	\item [\circled{7}] \fbox{If $T^*[-1]\umap(M_{\mr{dR}}\times X_{\ol{\del}}\times Y_{\mr{Dol}}, \mf g/  \mf g )$, where $M_{\mr{dR}}\times X_{\ol{\del}}\times Y_{\mr{Dol}}$ is $(2k+1)$-oriented:} We have 
	\begin{align*}
&	T^*[-1]\umap(M_{\mr{dR}}\times X_{\ol{\del}}\times Y_{\mr{Dol}}, \mf g/  \mf g ) \stackrel{\text{(a),(d) } }{\cong} 	\umap(M_{\mr{dR}}\times X_{\ol{\del}}\times Y_{\mr{Dol}}, T^*[2k]  B( \C[\eps]\otimes \mf g ) )\\
 & \stackrel{\text{(c)}}{\cong} \umap\left (M_{\mr{dR}}\times X_{\ol{\del}}\times Y_{\mr{Dol}},  B(\C[\eps,\eps']  \otimes \mf g ) \right  ) 
	\end{align*}
Hence, these are Chern--Simons theories on  $M_{\mr{dR}}\times X_{\ol{\del}}\times Y_{\mr{Dol}}$ enhanced by $A=\C[\eps,\eps']$.
	\item [\circled{8}] \fbox{If $T^*[-1]\umap( \C_{\ol{\del}} , T[1]( 0 \sslash \mf g ) )$ or $T^*[-1]\umap( \C_{\ol{\del}} , T[1] B\mf g )$:} We have
	\begin{align*}
&	T^*[-1]\umap( \C_{\ol{\del}} , T[1]( A \otimes \mf g ) )	\stackrel{\text{(a)}}{\cong } \umap\left ( \C_{\ol{\del}} ,   T^*B( A[\eps]  \otimes \mf g ) \right)\\
& \stackrel{!}{\cong}  \umap\left ( \C_{\ol{\del}} ,   T[2k+1]B( A[\eps]  \otimes \mf g ) \right) \cong  \umap\left ( \C_{\ol{\del}} ,   B( A[\eps,\eps' ]  \otimes \mf g ) \right)
	\end{align*}
where the isomorphism marked by $\stackrel{!}{\cong}$ holds because $B(A[\eps]\otimes \mf g)$ is shifted symplectic of odd degree. Hence, there are Chern--Simons theories on $\C_{\ol{\del}}$ enhanced by $ A[\eps,\eps' ] $, where $A=\C[\delta]/(\delta^2)$ or $A=\C$.
	\item [\circled{9}] \fbox{If we have the de Rham stack of a mapping stack, or when the target is a de Rham stack:} \\ The cyclic $L_\infty$-algebra is quasi-isomorphic to the zero complex. Consequently, these Chern--Simons theories are perturbatively trivial. Similarly, if $A$ is cohomologically trivial, then the corresponding Chern--Simons theories are perturbatively trivial. Therefore, the proposition follows immediately in these cases.
\end{enumerate}
	\end{proof}

\begin{rmk}
In fact, the case of $\circled{9}$ has a bit more content in that these are written in a way that respects the entire Hodge family. For more details on the appearance of Hodge stack in the BV formalism, see \cite[Example 2.47]{EY1} or \cite[Subsection 1.6.3]{ESW}.
\end{rmk}

\section{Generalized Chern--Simons Theories in Topological Strings}

\subsection{Topological Strings and Open-String Field Theories} 

In this subsection, we introduce a particular class of topological string theories and discuss open-string field theories in that context. This setting is motivated by the foundational work of Costello and Li \cite{CL16}, which conjectures that these topological string theories arise as twists of Type II superstring theories. For further context, see \cite{CL16}, our joint work with Raghavendran \cite{RY1}, and references therein. For an introduction to the use of Ext groups in the context of topological strings aimed at physicists, see \cite{KatzSharpe, Sharpe03, Aspinwall05}. Other discussion of open-string field theories in topological string theory in the classical BV formalism, particularly aimed at physicists, can be found in \cite{EagerSaberi, BGKWWY}.

\subsubsection{} 

In this paper, we consider topological strings described by a 2-dimensional fully extended topological quantum field theory (TQFT), which is determined by a Calabi--Yau (CY) category $\mc C$ of dimension 5. This is based on the fact that a CY category gives rise to a 2-dimensional TQFT, a result first established by Costello and later given a more conceptual treatment by Hopkins and Lurie \cite{CostelloCY, LurieTQFT}.

Let $M$ be a symplectic manifold of dimension $2m$, which we assume throughout to be  $\R^{2m}$, and let $X$ be a smooth (not necessarily proper) CY manifold of dimension $n$, such that $2m+2n=10$. We then consider a CY category of dimension 5 of the form
\[\text{Fuk}(M)\otimes \coh(X),    \]
where $\text{Fuk}(M)$ is the differential graded (DG) ``Fukaya category'' and $\coh(X)$ is the DG category of coherent sheaves on $X$. The category $\text{Fuk}(\R^{2m})$ is generated by a Lagrangian submanifold $\R^m\subset \R^{2m}$ and it has morphism complexes $\uhom_{ \text{Fuk}(\R^{2m}) }( \R^m,\R^m ): =(  \Omega^\bullet(\R^m),d )$. The tensor product $\otimes $ in the 2-category of DG categories is defined as follows: for DG categories $\mc C$ and $\mc D$, the expression $\mc E\otimes \mc F$ is an object of $\mc C \otimes \mc D$, where $\mc E \in\mc C $ and $\mc F\in\mc D $, and the morphism complexess are given by 
\begin{equation}\label{eqn:tensor}
\ul \Hom_{ \mc  C\otimes \mc D } ( \mc E\otimes \mc F,\mc E'\otimes \mc F' ) \cong  \ul \Hom_{\mc C}(\mc E,\mc E')  \otimes \ul \Hom_{\mc D}(\mc F,\mc F').	
\end{equation}

\begin{rmk}\label{rmk:CLconjecture}
Costello and Li \cite{CL16} developed the idea of twisting in string theory and supergravity. Moreover, they conjectured that twisted versions of Type II string theories correspond to topological string theories of the type described above. Specifically, they argued that twists of Type IIB string theory are $\mr{Fuk}( M) \otimes \coh(X)$ when $\dim X$ is odd; henceforth we denote such topological strings by $\IIB[M_A \times X_B]$. Similarly, twists of Type IIA string theory are $\mr{Fuk}( M) \otimes \coh(X)$ when $\dim X$ is even and denoted by $\IIA[M_A \times X_B]$. In summary, we consider twists of Type II string theories corresponding to 2-dimensional extended TQFTs determined by $\mathrm{Fuk}(M) \otimes \coh(X)$, with the dimensions of $M$ and $X$ varying accordingly. We will use II[$M_A \times X_B$] to refer to both IIA and IIB twists.
\end{rmk}

\subsubsection{}
Open-string field theory refers to a field theory that describes the full dynamics of open strings, including both massless and massive modes, which correspond to different excitations of the strings. In the low-energy limit, when one restricts to the massless sector, this reduces to the effective world-volume gauge theory on D-branes, which is typically a supersymmetric Yang--Mills theory.

In topological string theory described by a CY category $\mc C$ of dimension 5, a \emph{D-brane} refers to an object $\mc F \in \mc C$, where the dimension of its support determines the world-volume dimension of the corresponding D$k$-brane. If $\mc F$ has a support on a $(k+1)$-dimensional space, then it is referred to as a D$k$-brane. For instance, $\mc F= \R^2 \otimes \mc O_{\C^2}\in \mr{Fuk}(\R^4)\otimes \coh(\C^3)$ is a D5-brane.

\begin{rmk}
	It is easy to see that $\IIB[M_A \times X_B]$ only admits supersymmetric D$(2p+1)$-branes, whereas $\IIA[M_A \times X_B]$ only admits supersymmetric D$(2p)$-branes. This is precisely as expected from the physical string theory. Hence, this serves as an initial consistency check for the conjecture of Costello and Li.
\end{rmk}

 In the context of topological strings --- which captures only the massless sector --- open-string field theory is the world-volume gauge theory on D-branes, describing interactions between open strings ending on these D-branes. Because this field theory encodes the states of open strings with boundary conditions given by $\mc F$ at each end, we consider the self-Hom complex $\uhom_{\mc C }( \mc F,\mc F )$, which describes these states as morphisms between $\mc F$ and itself. This complex carries an algebra structure that models the interaction of open strings, represented geometrically by the pair-of-pants diagram, where two incoming strings merge into one. Consequently, the self-Ext algebra $\ext_{\mc C }(\mc F,\mc F):=H^\bullet\left( \uhom_{\mc C }( \mc F,\mc F )\right )$ inherits an $A_\infty$-algebra structure. If $\mc F\in \mc C$ is chosen so that $\ext_{\mc C }(\mc F,\mc F)$ is finite-dimensional, then the CY structure of $\mc C$ induces a trace map, which endows $\ext_{\mc C }(\mc F,\mc F)$ with a \emph{cyclic $A_\infty$-algebra} structure.
 
To build the corresponding field theory from this data, the Ext algebra $\ext_{\mc C }(\mc F,\mc F)$ must be promoted to a sheaf of sections of a graded vector bundle over spacetime, which then we denote by  $\ul \ext_{\mc C }(\mc F,\mc F)$. Since we understand the tensor product of DG categories by \eqref{eqn:tensor}, it is enough to understand each case of $\mr{Fuk}(\R^{2m})$ and $\coh(X)$ separately.

The case of $\mr{Fuk}(\R^{2m})$ is relatively straightforward, as the object $\R^m\in  \mr{Fuk}(\R^{2m})$ generates the category. Concretely, we set
\begin{equation}\label{eqn:Fuk}
	\ul\ext_{ \text{Fuk}(\R^{2m}) }( \R^m,\R^m) := \uhom_{ \text{Fuk}(\R^{2m}) }( \R^m,\R^m ) =(  \Omega^\bullet(\R^m),d ).
\end{equation}
This has a cyclic graded-commutative algebra structure, or a trivial cyclic local $L_\infty$-algebra structure on $\R^m$, and hence defines a classical field theory on $\R^m$ in a $\Z/2\Z$-graded sense when $m$ is odd.

\begin{rmk}
Due to non-perturbative instanton effects, one cannot define such an open-string field theory as a classical field theory for a general symplectic manifold $M$. This explains our choice of exclusively working with $\mr{Fuk}(\R^{2m})$ in the definition of topological strings. For further discussion on this topic, we refer to \cite[Remark 3.6]{RY1}.
\end{rmk}

On the other hand, if $\mc F  \in \coh(X) $ is a sheaf of sections of a vector bundle on a complex submanifold $Z\subset X$, then the Ext algebra $\ext_{\coh(X) }(\mc F,\mc F)$ can be promoted to a sheaf of DG algebras of sections of a vector bundle on $Z$.  To be more concrete, consider $\mc F= \mathcal{O}_Z\in \coh (X)$ supported on $Z \subset X$. Then, by \cite[(29)]{KatzSharpe}, we have 
\begin{equation}\label{eqn:coh}
\ul \ext_{\coh(X)} (\mc O_Z,\mc O_Z) := \left( \Omega^{0,\bullet}(Z, \wedge^\bullet \mc N_{Z/X} ) , \ol{\del} \right)  
\end{equation}
where $\mc N_{Z/X}$ is the normal bundle of $Z$ in $X$ and $ \ol{\del} $ is the Dolbeault operator associated with the holomorphic structure of $ \wedge^\bullet \mc N_{Z/X}$. This has a graded-commutative algebra structure, or a trivial local $L_\infty$-algebra structure, on $Z$. Furthermore, it is equipped with a cyclic pairing induced from the CY structure of $\coh(X)$, yielding a cyclic local $L_\infty$-algebra on $Z$ and hence a classical field theory on $Z$ in a $\Z/2\Z$-graded sense if $\dim_\C X $ is odd. 

\begin{defn}
	In topological string theory II$[\R^{2m}_A\times X_B]$, the \emph{open-string field theory} of a D-brane on $\R^m\times Z $ is a classical field theory on $\R^{m}\times Z$, described by $\ul\ext_{\mr{Fuk}(\R^{2m})\otimes \coh(X) }(\R^m\otimes \mc O_Z,\R^m\otimes \mc O_Z )$.
\end{defn}

\begin{rmk}
Since $m+\dim_\C X =5$, the resulting cyclic pairing has odd cohomological degree, so it always defines a classical field theory in a $\Z/2\Z$-graded sense.
	
\end{rmk}

\begin{rmk}
Consider a stack of $N$ D-branes, represented by $\mc F^{\oplus N} \in \mc C$. Its Ext algebra is given by
\[\ul\ext_{\mc C }(\mc F^{\oplus N},\mc F^{\oplus N}) \cong  \ul\ext_{\mc C }(\mc F,\mc F) \otimes \mf{gl}(N), \]
where its cyclic graded Lie algebra structure is precisely the one of the tensor product of a cyclic differential graded-commutative algebra $\ul\ext_{\mc C }(\mc F,\mc F)$ and a Lie algebra $\mf{gl}(N)$, precisely as for a Chern--Simons theory enhanced by a cyclic graded-commutative algebra. Thus, for the purpose of showing that a Chern--Simons theory with gauge group $\GL(N)$ is an open-string field theory, it is enough to assume $N=1$, which explains our choice of exclusively working with $N=1$ in the definition.
\end{rmk}

\begin{rmk}
The Calabi--Yau structure on $X$ endows $\ul \ext_{\coh(X)} (\mc O_Z,\mc O_Z)$ with a Calabi--Yau structure. In particular, if $Z$ is Calabi--Yau itself, then a choice of a volume form on $Z$ identifies the corresponding open-string field theory with Chern--Simons theory; we will suppress this choice in what follows.
\end{rmk}

\begin{ex}[open-string field theory for a D5-brane on $ \R^2 \times  \mathbb{C}^2 \subset \R^4 \times \mathbb{C}^3$] \label{ex:D5}
Consider a D5 brane on $\R^2 \times \mathbb{C}^2$ in $\IIB[\R^4_A \times \mathbb{C}^3_B]$. We then compute the Ext algebra
\begin{align*}
&\ul \ext _{\mr{Fuk}(\R^4 )\otimes \coh( \C^3 ) }(\R^2 \otimes \mc O_{\C^2}, \R^2 \otimes \mc O_{\C^2} ) \stackrel{\eqref{eqn:tensor}}{\cong}  \ul \ext _{\mr{Fuk}(\R^4 )  }(\R^2  , \R^2   )\otimes   \ul \ext _{  \coh( \C^3 ) }( \mc O_{\C^2}, \mc O_{\C^2}   )  \\
& \stackrel{\eqref{eqn:Fuk}, \eqref{eqn:coh}}{\cong}  \bracket{ \Omega^\bullet(\R^2) , d_{\R^2}} \otimes \bracket{ \Omega^{0,\bullet}(\C^2 ,  \wedge ^\bullet \mc N_{\C^2/\C^3 } ), \ol{\del} } \cong \bracket{ \Omega^\bullet(\R^2) \otimes  \Omega^{0,\bullet}(\C^2) [\eps],  d_{\R^2} + \ol{\del}_{\C^2}   }, 
\end{align*}
where $\eps$ is the degree 1 generator of the normal bundle, and $d_{\R^2} + \ol{\del}_{\C^2}  $ denotes $ d_{\R^2} \otimes 1+ 1\otimes  \ol{\del}_{\C^2}  $. This describes a Chern--Simons theory on $\R^2_{\mr{dR}}\times \C^2_{\ol{\del}}$ enhanced by $\C[\eps]$.
\end{ex}

\begin{rmk}\label{rmk:shift}
We have focused on open-string field theories in a CY category of dimension 5, which necessarily leads to $\Z/2\Z$-graded classical field theories. More precisely, in such a category, the cyclic pairing of the associated $L_\infty$-algebra has cohomological degree $\dim_\R M + \dim_\C X + 2\,\dim_\C Y - 5$. By  Definition~\ref{defn:genCS}, this forces the resulting theory to be only $\Z/2\Z$-graded.

However, one can produce a $\mathbb{Z}$-graded open-string field theory by choosing appropriate twisting data in the underlying supersymmetric field theory. Indeed, as discussed in \cite[Subsection~3.1.3]{RY1}, such a grading can arise from a shifted supergravity background normal to the branes. For example, for a D3-brane on $\R^2\times \C$ in $\IIB[\R^4_A \times \C^3_B]$, one may consider the background $\R^4 \times \C^2 \times \C[2]$, where $\C[2]$ is an affine space with a cohomological shift.\footnote{The same shift appears in the interpretation of the singular support condition in geometric Langlands, as discussed in \cite{EY2}, in which the moduli space of vacua of the field theory parametrizes possible configurations of D-branes in $\C[2]$.} With this shift, the geometry $\R^4 \times \C^2 \times \C[2]$ corresponds to a CY category of dimension 3, which aligns with many existing treatments of topological strings in the literature. Developing a supergravity framework that can fully account for a $\Z$-graded open-string field theory in general is beyond the scope of this paper.
\end{rmk}

\subsubsection{}

We will need to introduce compactification of open-string field theory along a compact direction, in order to obtain lower-dimensional theories.

\begin{defn}\cite[Section 19]{CostelloSUSY}
Let $\mc F$ be a classical field theory on $M$ and let $\pi \colon M\to N$ be a fiber bundle. The \emph{compactification}\footnote{We thank the referee for suggesting that compactification is a better term than dimensional reduction in this context.} of $\mc F$ along $\pi $ is defined as the pushforward $\pi _* \mc F$.
\end{defn}

When $ \pi \colon M=N\times F \to N$ is a trivial bundle with fiber $F$, then compactification along $\pi$ is also called compactification along $F$. In this case, the pushforward is computed by taking cohomology of $F$.

\begin{ex}\label{ex:CY2}
We illustrate the idea by examining a D5-brane on $\R^2 \times \C  \times \bb P^1 $ in $\IIB[\R^4_A \times (\C \times  \tot_{\bb P^1}(\mc O(-2)))_B ]$, resulting in a 6-dimensional field theory, and then by compactifying along $\bb P^1$.
 
  It yields an open-string field theory on $\R^2 \times \C  \times \bb P^1 $ defined by \begin{align*}
&\ul \ext _{\mr{Fuk}(\R^4 )\otimes \coh( \C \times  \tot_{\bb P^1}(\mc O(-2)) ) }(\R^2 \otimes \mc O_{\C  \times \bb P^1}, \R^2 \otimes \mc O_{\C  \times \bb P^1} ) \\
&\stackrel{\eqref{eqn:tensor}}{\cong}  \ul \ext _{\mr{Fuk}(\R^4 )  }(\R^2  , \R^2   )\otimes   \ul \ext _{  \coh(\C) }( \mc O_{\C   }, \mc O_{\C }   ) \otimes   \ul \ext _{  \coh(\tot_{\bb P^1}(\mc O(-2)) ) }( \mc O_{  \bb P^1}, \mc O_{\bb P^1}   )  \\
& \stackrel{\eqref{eqn:Fuk}, \eqref{eqn:coh}}{\cong}  \bracket{ \Omega^\bullet(\R^2) , d_{\R^2}} \otimes \bracket{ \Omega^{0,\bullet}(\C  )  , \ol{\del}_{\C  } } \otimes \bracket{ \Omega^{0,\bullet}(\bb P^1 , \wedge^\bullet \mc N_{\bb P^1 / \tot_{\bb P^1}(\mc O(-2))  } ), \ol{\del}  }.
\end{align*}
Let $\pi \colon \R^2 \times \C  \times \bb P^1  \to \R^2 \times \C $ be the projection.  To implement compactification along $\bb P^1$, we note \[H^\bullet ( \bb P^1, \wedge^\bullet \mc N_{\bb P^1 / \tot_{\bb P^1}(\mc O(-2))  }  )\cong H^\bullet(\bb P^1, \mc O) \oplus H^\bullet (\bb P^1, \mc O(-2))[-1]\cong \C \oplus \C \delta  \] where $\delta$ has cohomological degree 2 from the wedge power and cohomological grading. As a cyclic graded-commutative algebra, this is $\C[\delta]/(\delta^2)$ (with no higher products for a degree reason), with a trace map $\C[\delta]/(\delta^2)\to \C$ defined by $\delta\mapsto 1$. Hence the compactification of open-string field theory on a D5-brane on $\R^2 \times \C  \times \bb P^1 $ in $\IIB[\R^4_A \times (\C \times  \tot_{\bb P^1}(\mc O(-2)))_B ]$ along $\bb P^1$ is described by \[\left(   \Omega^\bullet(\R^2) \otimes   \Omega^{0,\bullet}(\C  )    \otimes \C[\delta]/(\delta^2) , d _{\R^2} + d_{\ol{\C}} \right) .\] This is a Chern--Simons theory on $\R^2_{\mr{dR}}\times \C_{\ol{\del}}$ enhanced by $\C[\delta]/(\delta^2) $.
\end{ex}

\subsection{Main Result}

\begin{prop}\label{prop:CSfromTS}
Each generalized Chern--Simons theory in Proposition \ref{prop:twistedSYMisCS}, which arises from a twist of pure supersymmetric Yang--Mills theory, can be realized as an open-string field theory in a suitable topological string background, as summarized in the following tables.
\end{prop}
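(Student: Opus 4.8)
The plan is to verify each row of the tables by computing the self-Ext algebra of the indicated D-brane and matching it, after tensoring with $\mf{gl}(N)$, against the generalized Chern--Simons description of Proposition \ref{prop:twistedSYMisCS}. Concretely, for a brane supported on $\R^m\times Z$ in $\mc C=\mr{Fuk}(\R^{2m})\otimes \coh(X)$, the tensor product formula \eqref{eqn:tensor} together with \eqref{eqn:Fuk} and \eqref{eqn:coh} factors the open-string field theory as
\[ \ul\ext_{\mc C}(\mc O_{\R^m\times Z},\mc O_{\R^m\times Z}) \cong (\Omega^\bullet(\R^m),d)\otimes (\Omega^{0,\bullet}(Z,\wedge^\bullet \mc N_{Z/X}),\ol{\del}). \]
The first factor supplies the de Rham directions $\R^m_{\mr{dR}}$ of the spacetime, while the second supplies the holomorphic directions together with the enhancement algebra, which is read off from $\wedge^\bullet \mc N_{Z/X}$. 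Throughout, the constraint $m+\dim_\C X=5$ guarantees that $\mc C$ is a Calabi--Yau category of dimension $5$, and the Calabi--Yau structure of $X$ induces, via Serre duality, the cyclic trace on the resulting algebra; in each case I would record that this trace is the one demanded by the Chern--Simons description.

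The noncompact entries (Table~\ref{table1} and the first rows of the remaining tables) are handled uniformly. For a linear brane $Z=\C^a\subset \C^n=X$ the normal bundle is trivial, so $\wedge^\bullet \mc N_{Z/X}\cong \wedge^\bullet \C^{n-a}$ is free graded-commutative on $n-a$ odd generators. I would then split these generators: absorbing $b$ of them reconstructs the holomorphic cotangent bundle $\Omega^1_{\C^b}$ and promotes the corresponding factors from $\ol{\del}$-type to Dolbeault type via $\Omega^{0,\bullet}(\C^b)\otimes \wedge^\bullet \Omega^1_{\C^b}\cong \Omega^{\bullet,\bullet}(\C^b)$, while the remaining generators assemble into the odd enhancement $A\in\{\C,\C[\eps],\C[\eps_1,\eps_2],\dots\}$. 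This reproduces the spacetimes and algebras of types \circled{1}, \circled{2}, \circled{7}, \circled{8}, the first being the case $a=n$ (trivial normal bundle, $A=\C$).

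The compactification entries (Tables~\ref{table2}--\ref{table4}) require the even, degree-$2$ variables $\delta$, which cannot come from a trivial normal bundle. For these I would take $X=\tot_B(E)$ over a compact base $B\in\{\bb P^1,\bb P^2,\bb P^3,\bb P^1\times\bb P^1\}$ with $\det E\cong K_B$ --- the condition ensuring $X$ is Calabi--Yau --- and the brane $\mc O_B$, so that $\mc N_{B/X}\cong E$. Compactifying along $B$ as in Example~\ref{ex:CY2} replaces $\wedge^\bullet \mc N$ by its cohomology $H^\bullet(B,\wedge^\bullet E)$, and the core computations are: $\bb P^1$ with $E=\mc O(-2)$ gives $\C[\delta]/(\delta^2)$ with $\delta$ in degree $2$; $\bb P^1$ with $E=\mc O(-1)^{\oplus 2}$ and $\bb P^2$ with $E=\mc O(-3)$ each give $\C[\eps']$ with $\eps'$ odd; $\bb P^1\times\bb P^1$ with $E=\Omega^1_{\bb P^1\times\bb P^1}$ gives $\C[\delta_1,\delta_2]/(\delta_1^2,\delta_2^2)$; and $\bb P^3$ with $E=\mc O(-4)$ (equivalently $\bb P^2$ with $E=\mc O(-1)\oplus \mc O(-2)$) gives $\C[\delta]/(\delta^2)$. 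In each case the degrees match those demanded by the tables in the $\Z/2\Z$-graded sense, and Serre duality on $B$, pairing $H^0(B,\mc O)$ with $H^{\dim B}(B,K_B)$, produces the trace $\delta\mapsto 1$. Finally, the type \circled{9} rows (the ``impure''/``generic''/``(A)'' twists) require deforming the differential by $\tfrac{\del}{\del\eps}$; I would realize this by turning on the closed-string background recorded in the closed-string column, namely the polyvector field $w\in\PV$ in the normal direction, which deforms the BRST operator accordingly.

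I expect the main obstacle to be the compactification block: one must simultaneously verify $\det E\cong K_B$ for every total space, carry out the sheaf-cohomology computations of $H^\bullet(B,\wedge^\bullet E)$ while correctly bookkeeping the two gradings (the exterior-power degree $p$ and the Dolbeault degree $q$, whose sum $p+q$ gives the cohomological degree of $\delta$ or $\eps'$), and check that Serre duality delivers precisely the nondegenerate cyclic pairing of the Chern--Simons description rather than merely the correct underlying algebra. The appearance of the same enhancement algebra from genuinely different geometries --- for instance $\C[\eps']$ from both the $\bb P^1$ and the $\bb P^2$ brane, or $\C[\delta]/(\delta^2)$ from $\bb P^3$ and from $\bb P^2$ --- is a useful internal consistency check rather than a difficulty. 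The remaining noncompact cases are routine once the normal-bundle splitting above is in place, so the theorem then follows by combining this proposition with Proposition~\ref{prop:twistedSYMisCS}.
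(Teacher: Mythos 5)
Your proposal matches the paper's proof essentially step for step: the same tensor-factorization of the Ext algebra into $(\Omega^\bullet(\R^m),d)\otimes\C[\eps_1,\dots,\eps_p]\otimes\ext_{\coh(X)}(\mc O_Z,\mc O_Z)$, the same flat-normal-bundle analysis with the identification $\Omega^{0,\bullet}(\C)[\eps]\cong\Omega^{\bullet,\bullet}(\C)$ producing the Dolbeault spacetimes, and exactly the six sheaf-cohomology computations of Lemma \ref{lemma:CYext} for the compact bases (with the correct CY condition $\det E\cong K_B$ and Serre-duality trace). The only cosmetic difference is in the type-\circled{9} rows: you propose to match the $\del/\del\eps$ deformation against the closed-string field via the closed-open map (which is the content of Remark \ref{rmk:linear superpotential}), whereas the paper's proof shortcuts these cases by observing that both sides are perturbatively trivial, so the equivalence is immediate.
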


\begin{rmk}
We note a parallel between our main theorem and Witten's seminal result on the realization of Chern--Simons theory via topological strings.	
\end{rmk}

In the table, we use $\varepsilon'$ to denote an odd variable obtained from compactification, as opposed to $\varepsilon$ variables arising from the embedding $\C^k \subset \C^{k+p}$ that parametrize a normal direction. Note that these variables are algebraically equivalent.

In the table, a closed string field $w$ renders the given open-string field theory perturbatively trivial, allowing the desired equivalences to follow immediately in those cases. Therefore, it suffices to address the remaining cases. The context for closed string fields will be explained further in Remark \ref{rmk:linear superpotential}.

\begin{table}[htbp]
\centering
\begin{tabular}{c|c||c|c} 
\text{Spacetime} & \({A}\) & \text{D-branes} & \text{Closed String} \\ \hline
\(\C^5_{\ol{\del}}\) & \(\C\) & 
{D9-branes on \(\C^5\)  in IIB[\(\C^5_B\)]} & 
\\ \hline

\(\R_{\mr{dR}} \times \C^4_{\ol{\del}}\) & \(\C\) & 
\shortstack{D8-branes on \(\R\times \C^4\) in IIA[\(\R^2_A\times \C^4_B\)]} & 
\\ \hline

\(\C^4_{\ol{\del}}\) & \(\C[\eps]\) & 
\shortstack{D7-branes on \(\C^4\)  in IIB[\(\C^5_B\)]} & 
\\ \hline
\(\R^2_{\mr{dR}} \times \C^3_{\ol{\del}}\) & \(\C\) & 
\shortstack{D7-branes on \(\R^2\times \C^3\)  in IIB[\(\R^4_A\times \C^3_B\)]} & 
\\ \hline
\(\C^4_{\ol{\del}}\) & \((\C[\eps],\,\frac{\del}{\del \eps})\) & 
{D7-branes on \(\C^4\)  in IIB[\(\C^5_B\)]} & 
{  \(w\in \PV(\C^5)\)} \\ \hline

\(\R_{\mr{dR}} \times \C^3_{\ol{\del}}\) & \(\C[\eps]\) & 
\shortstack{D6-branes on \(\R\times \C^3\) in IIA[\(\R^2_A\times \C^4_B\)]} & 
\\ \hline
\(\R^3_{\mr{dR}} \times \C^2_{\ol{\del}}\) & \(\C\) & 
\shortstack{D6-branes on \(\R^3\times \C^2\)  in IIA[\(\R^6_A\times \C^2_B\)]} & 
\\ \hline
\(\R_{\mr{dR}} \times \C^3_{\ol{\del}}\) & \( (\C[\eps],\,\frac{\del}{\del \eps} )\) & 
\shortstack{D6-branes on \(\R\times \C^3\)  in IIA[\(\R^2_A\times \C^4_B\)]} & 
\shortstack{ \(w\in \PV(\C^4)\)} \\ \hline

\(\C^3_{\ol{\del}}\) & \(\C[\eps_1,\eps_2]\) & 
\shortstack{D5-branes on \(\C^3\)  in IIB[\(\C^5_B\)]} & 
\\ \hline
\(\R^2_{\mr{dR}} \times \C^2_{\ol{\del}}\) & \(\C[\eps]\) & 
\shortstack{D5-branes on \(\R^2\times \C^2\)  in IIB[\(\R^4_A\times \C^3_B\)]} & 
\\ \hline
\(\R^4_{\mr{dR}} \times \C_{\ol{\del}}\) & \(\C\) & 
\shortstack{D5-branes on \(\R^4\times \C\)  in IIB[\(\R^8_A\times \C_B\)]} & 
\\ \hline
\(\R^2_{\mr{dR}} \times \C^2_{\ol{\del}}\) & \( (\C[\eps],\,\frac{\del}{\del \eps} )\) & 
\shortstack{D5-branes on \(\R^2\times \C^2\)  in IIB[\(\R^4_A\times \C^3_B\)]} & 
\shortstack{   \(w\in \PV(\C^3)\)} \\ \hline

\(\R_{\mr{dR}} \times \C^2_{\ol{\del}}\) & \(\C[\eps_1,\eps_2]\) & 
\shortstack{D4-branes on \(\R\times \C^2\)  in IIA[\(\R^2_A\times \C^4_B\)]} & 
\\ \hline
\(\R^3_{\mr{dR}} \times \C_{\ol{\del}}\) & \(\C[\eps]\) & 
\shortstack{D4-branes on \(\R^3\times \C\)  in IIA[\(\R^6_A\times \C^2_B\)]} & 
\\ \hline
\(\R^5_{\mr{dR}}\) & \(\C\) & 
\shortstack{D4-branes on \(\R^5\)  in IIA[\(\R^{10}_A\)]} & 
\\ \hline
\(\R^3_{\mr{dR}} \times \C_{\ol{\del}}\) & \( (\C[\eps],\,\frac{\del}{\del \eps} )\) & 
\shortstack{D4-branes on \(\R^3\times \C\)  in IIA[\(\R^6_A\times \C^2_B\)]} & 
\shortstack{  \(w\in \PV(\C^2)\)} \\ \hline

\(\C^2_{\mr{Dol}}\) & \(\C[\eps]\) & 
\shortstack{D3-branes on \(\C^2\) in IIB[\(\C^5_B\)]} & 
\\ \hline
\(\R^2_{\mr{dR}} \times \C_{\mr{Dol}}\) & \(\C[\eps]\) & 
\shortstack{D3-branes on \(\R^2\times \C\)  in IIB[\(\R^4_A\times \C^3_B\)]} & 
\\ \hline
\(\R^4_{\mr{dR}}\) & \(\C[\eps]\) & 
\shortstack{D3-branes on \(\R^4\)  in IIB[\(\R^8_A\times \C_B\)]} & 
\\ \hline
\(\C^2_{\mr{Dol}}\) & \( (\C[\eps],\,\frac{\del}{\del \eps} )\) & 
\shortstack{D3-branes on \(\C^2\)  in IIB[\(\C^5_B\)]} & 
\shortstack{ \(w\in \PV(\C^5)\)} \\ \hline
\(\R^2_{\mr{dR}} \times \C_{\mr{Dol}}\) & \( (\C[\eps],\,\frac{\del}{\del \eps} )\) & 
\shortstack{D3-branes on \(\R^2\times \C\)  in IIB[\(\R^4_A\times \C^3_B\)]} & 
\shortstack{  \(w\in \PV(\C^3)\)} \\ \hline
\(\R^4_{\mr{dR}}\) & \( (\C[\eps],\,\frac{\del}{\del \eps} )\) & 
\shortstack{D3-branes on \(\R^4\)  in IIB[\(\R^8_A\times \C_B\)]} & 
\shortstack{  \(w\in \PV(\C)\)} \\ \hline

\(\R_{\mr{dR}} \times \C_{\mr{Dol}}\) & \(\C[\eps_1,\eps_2]\) & 
\shortstack{D2-branes on \(\R\times \C\)  in IIA[\(\R^2_A\times \C^4_B\)]} & 
\\ \hline
\(\R^3_{\mr{dR}}\) & \(\C[\eps_1,\eps_2]\) & 
\shortstack{D2-branes on \(\R^3\)  in IIA[\(\R^6_A\times \C^2_B\)]} & 
\\ \hline
\(\R^3_{\mr{dR}}\) & \( (\C[\eps_1,\eps_2],\,\frac{\del}{\del \eps_1} )\) & 
\shortstack{D2-branes on \(\R^3\) in IIA[\(\R^6_A\times \C^2_B\)]} & 
\shortstack{   \(w_1\in \PV(\C^2)\)} \\ \hline

\end{tabular}
\caption{Chern--Simons theories enhanced by $A$ as open-string field theories using flat geometry}
\label{table1-new}
\end{table}

\begin{table}
\centering
\renewcommand{\arraystretch}{0.6}
\begin{tabular}{c|c||c|c}
\text{Spacetime} & \({A}\) & \text{D-branes and Compactification} & \text{Closed String} \\ \hline
\multirow{2}{*}{$\C^3_{\ol{\del}}$} & \multirow{2}{*}{$\C[\delta]/(\delta^2)$} & 
D7-branes on \(\C^3 \times \bb P^1\) in IIB\([ (\C^3 \times \tot_{\bb P^1}(\mc O(-2)))_B]\) & \\ 
 &  & compactified along \(\bb P^1\) & \\ \hline
\multirow{2}{*}{$\R_{\mr{dR}} \times \C^2_{\ol{\del}}$} & \multirow{2}{*}{$\C[\delta]/(\delta^2)$} & 
D6-branes on \(\R \times \C^2 \times \bb P^1\) in IIA\([\R^2_A \times (\C^2 \times \tot_{\bb P^1}(\mc O(-2)))_B]\) & \\
 &  & compactified along \(\bb P^1\) & \\ \hline
\multirow{2}{*}{$\C^2_{\ol{\del}}$} & \multirow{2}{*}{$\C[\eps,\delta]/(\delta^2)$} & 
D5-branes on \(\C^2 \times \bb P^1\) in IIB\([(\C^3 \times \tot_{\bb P^1}(\mc O(-2)))_B]\) & \\ 
 &  & compactified along \(\bb P^1\) & \\ \hline
\multirow{2}{*}{$\R^2_{\mr{dR}}\times \C_{\ol{\del}}$} & \multirow{2}{*}{$\C[\delta]/(\delta^2)$} & 
D5-branes on \(\R^2\times \C \times \bb P^1\) in IIB\([\R^4_A\times (\C \times \tot_{\bb P^1}(\mc O(-2)))_B]\) & \\ 
 &  & compactified along \(\bb P^1\) & \\ \hline
\multirow{2}{*}{$\C^2_{\ol{\del}}$} & \multirow{2}{*}{$\bigl(\C[\eps,\delta]/(\delta^2),\,\frac{\del}{\del \eps}\bigr)$} & 
D5-branes on \(\C^2 \times \bb P^1\) in IIB\([(\C^3 \times \tot_{\bb P^1}(\mc O(-2)))_B]\) & 
\multirow{2}{*}{\(w\in \PV(\C^3)\)} \\
 &  & compactified along \(\bb P^1\) & \\ \hline
\multirow{2}{*}{$\R_{\mr{dR}}\times \C_{\ol{\del}}$} & \multirow{2}{*}{$\C[\eps,\delta]/(\delta^2)$} & 
D4-branes on \(\R\times \C\times \bb P^1\) in IIA\([\R^2_A \times (\C^2 \times \tot_{\bb P^1}(\mc O(-2)))_B]\) & \\ 
 &  & compactified along \(\bb P^1\) & \\ \hline
\multirow{2}{*}{$\R^3_{\mr{dR}}$} & \multirow{2}{*}{$\C[\delta]/(\delta^2)$} & 
D4-branes on \(\R^3 \times \bb P^1\) in IIA\([\R^6_A \times \tot_{\bb P^1}(\mc O(-2))_B]\) & \\ 
 &  & compactified along \(\bb P^1\) & \\ \hline
\multirow{2}{*}{$\R_{\mr{dR}}\times \C_{\ol{\del}}$} & \multirow{2}{*}{$\bigl(\C[\eps,\delta]/(\delta^2),\,\frac{\del}{\del \eps}\bigr)$} & 
D4-branes on \(\R\times \C\times \bb P^1\) in IIA\([\R^2_A \times (\C^2 \times \tot_{\bb P^1}(\mc O(-2)))_B]\) & 
\multirow{2}{*}{\(w\in \PV(\C^2)\)} \\ 
 &  & compactified along \(\bb P^1\) & \\ \hline
\multirow{2}{*}{$\C_{\ol{\del}}$} 
  & \multirow{2}{*}{$\C[\eps_1,\eps_2,\delta]/(\delta^2)$} 
  & D3-branes on \(\C \times \bb P^1\) in IIB\([(\C^3 \times \tot_{\bb P^1}(\mc O(-2)))_B]\) & \\ 
  &  & compactified along \(\bb P^1\) & \\ \hline
\multirow{2}{*}{$\R^2_{\mr{dR}}$} 
  & \multirow{2}{*}{$\C[\eps,\delta]/(\delta^2)$} 
  & D3-branes on \(\R^2 \times \bb P^1\) in IIB\([\R^4_A \times (\C \times \tot_{\bb P^1}(\mc O(-2)))_B]\) & \\ 
  &  & compactified along \(\bb P^1\) & \\ \hline
\multirow{2}{*}{$\R^2_{\mr{dR}}$} 
  & \multirow{2}{*}{$\bigl(\C[\eps,\delta]/(\delta^2),\,\frac{\del}{\del \eps}\bigr)$} 
  & D3-branes on \(\R^2 \times \bb P^1\) in IIB\([\R^4_A \times (\C \times \tot_{\bb P^1}(\mc O(-2)))_B]\) 
  & \multirow{2}{*}{\(w\in \PV(\C)\)} \\ 
  &  & compactified along \(\bb P^1\) & \\ \hline
\end{tabular}
\caption{Chern--Simons theories enhanced by $A$ as open-string field theories using $\tot_{\bb P^1}(\mc O(-2))$}
\label{table2-new}
\end{table}

\begin{table}
\centering
\renewcommand{\arraystretch}{0.6}
\begin{tabular}{c|c||c|c}
\text{Spacetime} & \( {A}\) & \text{D-branes and Compactification} & Closed String \\ \hline
\multirow{4}{*}{$\C^2_{\ol{\del}}$} 
  & \multirow{4}{*}{$\C[\eps']$} 
  & D5-branes on \(\C^2 \times \bb P^1\) in IIB\([(\C^2 \times \tot_{\bb P^1}(\mc O(-1)\oplus \mc O(-1)))_B]\) \\ 
  && compactified along \(\bb P^1\) \\
\cline{3-4}
  &  & D7-branes on \(\C^2 \times \bb P^2\) in IIB\([(\C^2 \times \tot_{\bb P^2}(\mc O(-3)))_B]\) \\
  &&compactified along \(\bb P^2\) \\
   \hline
\multirow{4}{*}{$\R_{\mr{dR}}\times \C_{\ol{\del}}$} 
  & \multirow{4}{*}{$\C[\eps']$} 
  & D4-branes on \(\R \times \C \times \bb P^1\) in IIA\([\R^2_A \times (\C \times \tot_{\bb P^1}(\mc O(-1)\oplus \mc O(-1)))_B]\) \\ 
  &&compactified along \(\bb P^1\) \\
\cline{3-4}
  &  & D6-branes on \(\R \times \C \times \bb P^2\) in IIA\([\R^2_A \times (\C \times \tot_{\bb P^2}(\mc O(-3)))_B]\) \\
  &&compactified along \(\bb P^2\)\\
   \hline 
\multirow{4}{*}{$\C_{\ol{\del}}$}   & \multirow{4}{*}{$\C[\eps,\eps']$}  & D3-branes on \(\C \times \bb P^1\) in IIB\([(\C^2 \times \tot_{\bb P^1}(\mc O(-1)\oplus\mc O(-1)))_B]\) &\\
    &  & compactified along \(\bb P^1\) & \\ 
    \cline{3-4}
&&   D5-branes on \(\C \times \bb P^2\) in IIB\([(\C^2 \times \tot_{\bb P^2}(\mc O(-3)))_B]\)
  & \\ 
  &  & compactified along \(\bb P^2\) & \\ \hline
\multirow{4}{*}{$\R^2_{\mr{dR}}$} & \multirow{4}{*}{$\C[\eps']$} 
  &  D3-branes on \(\R^2 \times \bb P^1\) in IIB\([\R^4_A \times \tot_{\bb P^1}(\mc O(-1)\oplus\mc O(-1))_B]\) \\
  &  & compactified along \(\bb P^1\) & \\
    \cline{3-4}  
  && D5-branes on \(\R^2 \times \bb P^2\) in IIB\([\R^4_A \times \tot_{\bb P^2}(\mc O(-3))_B]\)  & \\ 
  &  & compactified along \(\bb P^2\) & \\ \hline
\multirow{4}{*}{$\C_{\ol{\del}}$} & \multirow{4}{*}{$\bigl(\C[\eps,\eps'],\,\frac{\del}{\del \eps}\bigr)$} 
  &  D3-branes on \(\C \times \bb P^1\) in IIB\([(\C^2 \times \tot_{\bb P^1}(\mc O(-1)\oplus\mc O(-1)))_B]\)& \multirow{4}{*}{\(w\in \PV(\C^2)\)} \\ 
  &  & compactified along \(\bb P^1\) & \\ 
    \cline{3-3}    
  && D5-branes on \(\C \times \bb P^2\) in IIB\([(\C^2 \times \tot_{\bb P^2}(\mc O(-3)))_B]\)
  &  \\ 
  &  & compactified along \(\bb P^2\) & \\ \hline
\end{tabular}
\caption{Chern--Simons theories enhanced by $A$ as open-string field theories using local CY 3-folds}
\label{table3-new}
\end{table}

\begin{table}
\centering
\renewcommand{\arraystretch}{0.6}
\begin{tabular}{c|c||c}
\text{Spacetime} & \( {A}\) & \text{D-branes and Compactification}   \\
\hline
  \multirow{2}{*}{\(\C_{\ol{\del}}\) }&  \multirow{2}{*}{\(\C[\delta_1,\delta_2]/(\delta_1^2,\delta_2^2)\)}  & D5-branes on $\C\times \bb P^1 \times \bb P^1$ in $\IIB[(\C \times \tot_{\bb P^1 \times \bb P^1}( \Omega^1_{\bb P^1 \times \bb P^1} ))_B ]$ \\  
&& compactified along $\bb P^1 \times \bb P^1 $ \\
\hline 
  \multirow{4}{*}{\(\C_{\ol{\del}}\) }&  \multirow{4}{*}{\(\C[\delta]/(\delta^2)\)}  & D5-branes on $\C\times \bb P^2$ in IIB[$(\C \times \tot_{\bb P^2}(\mc O(-1) \oplus \mc O(-2)   ) )_B$] \\  
&&  compactified along $\bb P^2$ \\
\cline{3-3}
 && D7-branes on $\C\times \bb P^3$ in  IIB[$(\C \times \tot_{\bb P^3}(\mc O(-4)   ) )_B$] \\  
&&  compactified along $\bb P^3$  \\
 \hline
\end{tabular}
\caption{Chern--Simons theories enhanced by $A$ as open-string field theories using local CY 4-folds}
\label{table4-new}
\end{table}
\FloatBarrier

We present the computations for $\ext_{\coh(X)}(\mc O_Z, \mc O_Z)$  required in the proof of Proposition \ref{prop:CSfromTS}.

\begin{lemma}\label{lemma:CYext}
The following are isomorphisms of cyclic graded-commutative algebras:
\begin{itemize}
	\item [(a)] $\ext_{\tot_{\bb P^1}(\mc O(-2)) }( \mc O_{\bb P^1} , \mc O_{\bb P^1} )\cong \C[\delta]/(\delta^2)$
	\item [(b)] $\ext_{\tot_{\bb P^1}(\mc O(-1) \oplus\mc O(-1) ) }( \mc O_{\bb P^1} , \mc O_{\bb P^1} )\cong \C[\eps']$
	\item [(c)] $\ext_{\tot_{\bb P^2}(\mc O(-3)) }( \mc O_{\bb P^2} , \mc O_{\bb P^2} )\cong  \C[\eps']$
	\item [(d)] $\ext_{\tot_{\bb P^1 \times \bb P^1}( \Omega^1_{\bb P^1 \times \bb P^1} )} ( \mc O_{\bb P^1 \times \bb P^1 },  \mc O_{\bb P^1 \times \bb P^1 } )\cong \C[\delta_1,\delta_2]/(\delta_1^2,\delta_2^2) $
	\item [(e)] $\ext_{\tot_{\bb P^2}(\mc O(-1) \oplus\mc O(-2) ) }( \mc O_{\bb P^2} , \mc O_{\bb P^2} )\cong \C[\delta]/(\delta^2)$
	\item [(f)] $\ext_{\tot_{\bb P^3}(\mc O(-4)   )}( \mc O_{\bb P^3},  \mc O_{\bb P^3} ) \cong \C [\delta]/(\delta^2) $
\end{itemize}
\end{lemma}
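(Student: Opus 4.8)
The plan is to reduce every case to a line-bundle cohomology computation via the identification \eqref{eqn:coh}. Each $X$ appearing in (a)--(f) is by construction the total space of a vector bundle $\mc E$ over its zero section $Z$, so the normal bundle is $\mc N_{Z/X}\cong \mc E$, and \eqref{eqn:coh} supplies a graded vector space isomorphism
\[\ext_{\coh(X)}(\mc O_Z,\mc O_Z)\cong \bigoplus_{i,j\geq 0} H^i(Z,\wedge^j \mc E),\]
where a class in $H^i(Z,\wedge^j\mc E)$ carries cohomological degree $i+j$. In each case $\mc E$ is a sum of line bundles --- for (d) one uses $\Omega^1_{\bb P^1\times\bb P^1}\cong \mc O(-2,0)\oplus\mc O(0,-2)$ in bidegree notation --- so every $\wedge^j\mc E$ is again a sum of line bundles, and I would evaluate all the groups $H^i(Z,\wedge^j\mc E)$ from the standard cohomology of line bundles on $\bb P^n$, using the K\"unneth formula on $\bb P^1\times\bb P^1$ for (d).

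The first concrete step is to list, in each case, the line-bundle summands of $\wedge^\bullet\mc E$ and read off the surviving cohomology. For example, in (a) only $\wedge^0\mc E=\mc O$ and $\wedge^1\mc E=\mc O(-2)$ contribute, giving $H^0(\bb P^1,\mc O)=\C$ in degree $0$ and $H^1(\bb P^1,\mc O(-2))=\C$ in degree $2$; in (b) the summands $\mc O,\ \mc O(-1)^{\oplus 2},\ \mc O(-2)$ contribute only $H^0(\mc O)$ in degree $0$ and $H^1(\mc O(-2))$ in degree $3$; and in each of (c),(e),(f) the single top class is $H^{\dim Z}(Z,\det\mc E)$, nonzero by Serre duality since $\det\mc E\cong K_Z$ (which is exactly what makes $X$ Calabi--Yau). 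Uniformly, the top class lands in degree $\dim Z+\rk\mc E=\dim X$, and the bottom class is $H^0(Z,\mc O)$ in degree $0$. Reading these off reproduces the claimed monomial bases of $\C[\delta]/(\delta^2)$, $\C[\eps']$, and $\C[\delta_1,\delta_2]/(\delta_1^2,\delta_2^2)$.

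Next I would fix the graded-commutative algebra structure, which under \eqref{eqn:coh} is the wedge product on $\wedge^\bullet\mc E$ combined with cup product in cohomology. In the single-generator cases (a),(c),(e),(f) the relations $\delta^2=0$ and $(\eps')^2=0$ are automatic, because the square lands in a cohomological degree where the graded vector space vanishes. In the two-generator case (d) one verifies that $\delta_1\delta_2$ spans the top piece $H^2(\bb P^1\times\bb P^1,\wedge^2\mc E)=H^2(\mc O(-2,-2))$, using that the cup product $H^1(\bb P^1,\mc O(-2))\otimes H^1(\bb P^1,\mc O(-2))\to H^2(\bb P^1\times\bb P^1,\mc O(-2,-2))$ from K\"unneth is an isomorphism, while $\delta_i^2=0$ since wedging a single line-bundle summand with itself vanishes. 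Finally, the cyclic trace is the one induced by the Calabi--Yau structure of $X$, namely Serre duality pairing $H^0(Z,\mc O)$ with $H^{\dim Z}(Z,\det\mc E)$ (and, in (d), the self-pairing of the two degree-$2$ classes); nondegeneracy of this pairing fixes the trace map up to scale as $\delta\mapsto 1$, resp. $\eps'\mapsto 1$ and $\delta_1\delta_2\mapsto 1$.

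I expect the only genuine obstacle beyond bookkeeping to be the multiplicative structure in case (d): one must confirm that the two degree-$2$ generators multiply to a generator of the degree-$4$ piece rather than to zero, whereas every other product is forced purely by degree considerations. This reduces to the K\"unneth isomorphism noted above, which presents no real difficulty; the remaining cases are then immediate.
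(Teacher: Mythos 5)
Your proposal is correct and follows essentially the same route as the paper: both reduce every case to standard line-bundle cohomology on projective spaces via \eqref{eqn:coh}, with the grading given by wedge power plus cohomological degree, products forced by degree considerations, and the cyclic trace coming from the Calabi--Yau structure (the paper also notes, in one line, that higher $A_\infty$ products vanish for degree reasons). The only divergence is case (d), where the paper observes $\tot_{\bb P^1 \times \bb P^1}( \Omega^1_{\bb P^1 \times \bb P^1} ) \cong \tot_{\bb P^1}(\mc O(-2)) \times \tot_{\bb P^1}(\mc O(-2))$ and simply tensors case (a) with itself, while you split $\Omega^1_{\bb P^1\times\bb P^1}\cong \mc O(-2,0)\oplus\mc O(0,-2)$ and check $\delta_1\delta_2\neq 0$ by hand via the K\"unneth cup-product isomorphism --- the same computation in different packaging.
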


\begin{proof}
Since these computations are standard in algebraic geometry, we provide a brief outline.
\begin{itemize}
	\item [(a)] This was proved in Example \ref{ex:CY2}. Cases (c) and (f) follow by analogous arguments.
	\item [(b)]	The proof of (b) proceeds as follows. In general, for a smooth projective curve $C$ and a vector bundle $\mc V\to C$ of rank 2, we need to compute $H^\bullet(C , \bigoplus_{i=0}^2 \wedge^i \mc V ) $ as a cyclic graded-commutative algebra. The grading comes from both the wedge power and the cohomological degree: \[  \xymatrix@R-2pc{
\ul{0} & \ul{1} & \ul{2} & \ul{3}\\
H^0(C,\mc O_{C}) &H^1(C,\mc O_{C})  \oplus  H^0(C,\mc V)  & H^1(C,\mc V)  \oplus  H^0(C,\mc K_{C}) & H^1(C,\mc K_{C} )
}
 \] Moreover, the graded algebra structure $H^a( C, \wedge^b  \mc V ) \otimes H^c(C, \wedge^d  \mc V )\to H^{a+c}(C,  \wedge^{b+d} \mc  V )$ is induced by the wedge product, with a trace map $H^1(C , \mc K_{C}) \cong H^{1,1}(C) \cong \C$. In our case, we note 
\begin{align*}
 & \Ext_{  \tot _{\bb P^1}( \mc O(-1)\oplus\mc O(-1)  ) }( \mc O_{\bb P^1}, \mc O_{\bb P^1} ) \cong  H^\bullet \bracket{  \bb P^1, \bigoplus_{i=0}^2   \wedge^i ( \mc O(-1)\oplus\mc O(-1) ) }\\
 & \cong 	H^\bullet ( \bb P^1, \mc O ) \oplus H^\bullet (  \bb P^1,    \mc O(-1)\oplus\mc O(-1) )[-1] \oplus 	H^\bullet  ( \bb P^1, \mc O(-2) )[-2]  \cong  H^\bullet ( \bb P^1,\mc O  ) \oplus H^\bullet (\bb P^1,\mc O(-2))[-2]  \cong  \C \oplus \C \eps'  
\end{align*}
where $\eps'$ is a variable of cohomological degree 3. Thus, $\ext_{\tot_{\bb P^1}(\mc O(-1) \oplus\mc O(-1) ) }( \mc O_{\bb P^1} , \mc O_{\bb P^1} )$ is a graded-commutative algebra  $\C[\eps']$, equipped with an odd trace map $\tr \colon \C[\eps']\to \C$, defined by $\eps'\mapsto 1$. The case (e) follows from similar computation.
	\item [(d)] We observe that $\tot_{\bb P^1 \times \bb P^1}( \Omega^1_{\bb P^1 \times \bb P^1} ) \cong \tot_{\bb P^1}(\mc O(-2)) \times \tot_{\bb P^1}(\mc O(-2))$. Therefore, we obtain
	\[\ext_{\tot_{\bb P^1 \times \bb P^1}( \Omega^1_{\bb P^1 \times \bb P^1} )} ( \mc O_{\bb P^1 \times \bb P^1 },  \mc O_{\bb P^1 \times \bb P^1 } )\cong  \ext_{\tot_{\bb P^1}(\mc O(-2)) }( \mc O_{\bb P^1} , \mc O_{\bb P^1} )\otimes \ext_{\tot_{\bb P^1}(\mc O(-2)) }( \mc O_{\bb P^1} , \mc O_{\bb P^1} )\cong \C[\delta_1,\delta_2]/(\delta_1^2,\delta_2^2) \]
\end{itemize}
There are no higher multiplications for degree reasons.
\end{proof}

\begin{proof}[Proof of Proposition \ref{prop:CSfromTS}]
Let $X$ be a CY $(5-k-p-m)$-fold, which yields the topological string theory $\mr{II}[\R^{2m}_A \times ( \C^{k+p}\times X )_B]$. Let $Z$ be a compact, connected submanifold of $ X$. Consider a D-brane on $\R^m \times \C^k \times Z \subset \R^{2m} \times \C^{k+p} \times X$. The corresponding compactified open-string field theory is
\[\pi_* \left(  \ul \ext_{\mr{Fuk}(\R^{2m})\otimes  \coh(\C^{k+p}\times  X)}(\R^m\otimes  \mc O_{\C^k \times Z},\R^m \otimes  \mc O_{\C^k \times Z} )\right)  \cong \Omega^\bullet(\R^m)\otimes \Omega^{0,\bullet}(\C^k)\otimes \C[\eps_1,\cdots,\eps_p]\otimes \ext_{\coh(X)}(\mc O_Z,\mc O_Z )	\]
where $\pi \colon \R ^m\times \C ^k \times Z\to \R ^m\times \C ^k $ is the canonical projection. In other words, this construction determines the spacetime $\R^m_{\mr{dR}}\times \C^k_{\ol{\del}}$ and the associated graded-commutative algebra  \[A=  \C[\eps_1,\cdots,\eps_p]\otimes \ext_{\coh(X)}(\mc O_Z,\mc O_Z )\] 
with a cyclic pairing induced by the trace map $\C[\eps_1,\cdots,\eps_p]\to \C$, defined by $\eps_1\cdots\eps_p\mapsto 1$, and the CY structure on $X$. Since the only compact, connected submanifold of an affine space is a point, this construction is well-defined.

Table \ref{table1-new} follows from identifying the appearance of a Dolbeault stack. Specifically, the pair $(\text{spacetime}, A) = (\C_{\ol{\del}}, \C[\eps])$ can be identified with $(\C_{\mr{Dol}}, \C)$, since $\Omega^{0, \bullet}(\C)[\eps] \cong \Omega^{\bullet, \bullet}(\C)$ via the correspondence $\eps \mapsto dz$.\footnote{In more general settings, one must account for twisting homomorphisms or supergravity backgrounds.} Table \ref{table2-new} follows from Lemma \ref{lemma:CYext} (a), Table \ref{table3-new} from Lemma \ref{lemma:CYext} (b) and (c), and Table \ref{table4-new} from Lemma \ref{lemma:CYext} (d), (e), and (f).
\end{proof}

\begin{rmk}\label{rmk:linear superpotential}
By definition, a closed string field in a topological string theory determined by a CY category $\mc C$ is an element of the Hochschild cohomology\footnote{Technically, it is an element of the cyclic cohomology; however, we will disregard this subtlety.} $\HH^\bullet(\mc C)$ of $\mc C$. In our case, where $\mc C= \mr{Fuk}(\R^{2m}) \otimes \coh(\C^{k+p} \times X)$, we simplify the analysis by ignoring contributions from $\mr{Fuk}(\R^{2m})$ and $\coh(X)$, since no nontrivial elements from these factors are relevant here. A closed string field, viewed as an element of $\HH^\bullet(\coh(\C^{k+p}))$, induces a deformation of $A=\ext_{\coh( \C^{k+p})} ( \mc O_{\C^k}, \mc O_{\C^k } )$ via the \emph{closed-open map}\footnote{The Hochschild cohomology of a DG category $\mc C$ governs deformations of $\mc C$ and, correspondingly, deformations of any DG algebra $\ext_{\mc C}(\mc E, \mc E)$ for $\mc E \in \mc C$. This induces a closed-open map $\HH^\bullet(\mc C) \to \HH^\bullet(\ext_{\mc C}(\mc E, \mc E))$.}
\[ \Phi\colon  \HH^\bullet (\coh(\C^{k+p} ))\to \HH^\bullet(\ext_{\coh( \C^{k+p})} ( \mc O_{\C^k}, \mc O_{\C^k } )  )\cong  \HH^\bullet(\mc O(\C^{k|p})  ).\]
Concretely, writing $\mc O(\C^{k+p}) = \C[z_1, \dots, z_k, w_1, \dots, w_p]$ and applying the HKR theorem to identify Hochschild cohomology with polyvector fields, the closed-open map is described by a version of the Fourier transform
\[\xymatrix@R-15pt{\PV(\C^{k+p})  \ar[r]^-\Phi \ar@{=}[d] & \PV(\mc O(\C^{k|p}))\ar@{=}[d] \\
\C[z_1,\cdots,z_k,\del_{z_1},\cdots,\del_{z_k}, w_1,\cdots,w_p,\del_{w_1},\cdots,\del_{w_p}]  \ar[r] &  \C[z_1,\cdots,z_k,\del_{z_1},\cdots,\del_{z_k},\eps_1,\cdots,\eps_p, \del_{\eps_1},\cdots,\del_{\eps_p} ] }\] where the map is defined by $z_i \mapsto z_i$, $\del_{z_i} \mapsto \del_{z_i}$, $w_i \mapsto \del_{\eps_i}$, and $\del_{w_i} \mapsto \eps_i$. Thus, a closed string field $w_i \in \PV(\C^{k+p})$ induces a deformation of the open-string field theory on $\C^k \subset \C^{k+p}$, corresponding to a differential $\del_{\eps_i}$ on $\Omega^{0, \bullet}(\C^k)[\eps_1, \dots, \eps_p]$. Such a closed string field is known as a \emph{linear superpotential}. For further details, see the original paper of Costello and Li \cite[Subsection 7.2]{CL16}.
\end{rmk}

\subsection{Remarks}

We provide several remarks to give additional context for the result.

\subsubsection{}

For each twist of pure supersymmetric Yang--Mills theory, we have presented one or two specific D-brane configurations, using only projective spaces and the total spaces of vector bundles over them. This choice was motivated by the importance of toric CY manifolds in string dualities. That is, while some of our configurations may not have known untwisted counterparts in physical string theory, we expect them to be connected to geometric engineering setups through a sequence of dualities.

On the other hand, many other D-brane configurations can also realize the same cyclic graded-commutative algebra $A$, since $A$ depends solely on cohomological data.

\begin{ex}
Consider
\[ \ul\ext_{\C^{k+p}} ( \mc O_{\C^k },  \mc O_{\C^k } ) \cong \Omega^{0,\bullet}(\C^k)[\eps_1,\cdots,\eps_p] \]
On the other hand, for an elliptic curve $E$, we have 
\[ \pi_*  \bracket{  \ul\ext_{\C^{k+p-1} \times E } ( \mc O_{\C^k\times E  } ,  \mc O_{\C^k\times E  } )} \cong \Omega^{0,\bullet}(\C^k)[\eps_1,\cdots,\eps_{p-1}] \otimes H^\bullet(E,\mc O_E) \]
where $\pi\colon \C^k \times E\to \C^k$ is the projection map. This implies that the open-string field theory of a D-brane on $\C^k \subset \C^{k+p}$ is equivalent to the compactification of the open-string field theory of a D-brane on $\C^k \times E \subset \C^{k+p-1} \times E$ along $E$.
\end{ex}

\begin{rmk}
This is consistent with physical expectations: compactification along an elliptic curve preserves supersymmetry, as reflected by the fact that both $H^\bullet(E, \mc O_E)$ and $\ext_{\C^2}(\mc O_\C, \mc O_\C)$ coincide with $\C[\eps]$.
\end{rmk}

In the following example, two different configurations (one involving a complex string background with a vector bundle defined by a nontrivial extension of line bundles) still give rise to the same cyclic graded-commutative algebra:
 \begin{ex}
Let $E$ be an elliptic curve, and let $\mc L$ be a nontrivial line bundle of degree 0. Consider the vector bundle $\mc V = \mc L \oplus \mc L^{-1}$ over $E$. D5-branes on $\C^2 \times E \subset \C^2 \times \tot_E \mc V$ in the theory $\IIB[(\C^2 \times \tot_E \mc V)_B]$ produce the holomorphic twist of a 4-dimensional $\mc N = 2$ pure gauge theory upon compactification along $E$. On the other hand, if $\mc L$ is isomorphic to its dual, then there exists a nontrivial vector bundle $\mc V$ of rank 2, classified by a nonzero element in $\ext^1(\mc L, \mc L) \cong H^1(E, \mc O_E)$. In this case, D5-branes on $\C^2 \times E \subset \C^2 \times \tot_E \mc V$ still realize the same theory after compactification along $E$. 
\end{ex}

\subsubsection{}

Supersymmetric Yang--Mills theories in dimensions $\geq 3$ arise as compactifications of 10-dimensional, 6-dimensional, and 4-dimensional supersymmetric Yang--Mills theories. Accordingly, they can all be understood as originating from string backgrounds such as IIB[$\C^5_B$], IIB[$(\C^3 \times \tot_{\bb P^1}(\mc O(-2)))_B$], and IIB[$(\C^2 \times \tot_{\bb P^2}(\mc O(-3)))_B$]. As an example, we illustrate how IIB$[\C^5_B]$ can give rise to the other backgrounds listed in Table \ref{table1-new}.
\begin{itemize}
	\item Consider $\C^2_{z_1, z_2}$ with the holomorphic volume form $\Omega = dz_1 \wedge dz_2$ and its inverse bivector $\Pi = \Omega^{-1} = \del_{z_1} \wedge \del_{z_2}$. There is a quasi-isomorphism between the Poisson complex $(\PV(\C^2), [\Pi, -]_{\mr{SN}})$ and the de Rham complex $(\Omega^\bullet(\R^4), d)$. This indicates that IIB$[\R^4_A \times \C^3_B]$ can be viewed as a deformation of IIB$[\C^5_B]$ by a closed string field $\Pi = \del_{z_1} \wedge \del_{z_2}$. Under this deformation, the open-string field theory of a D-brane on $\C^{k-1}_{z_1, z_3, \dots, z_k}$ is transformed into one on $\R^2 \times \C^{k-2}_{z_3, \dots, z_k}$, turning a purely holomorphic twist into a holomorphic-topological twist. For example, a D3-brane world-volume theory on $\C_{z_1} \times \C_{w_1}$ in IIB[$\C^5_B$] becomes a theory on $\R^2 \times \C_{w_1}$ in IIB$[\R^4_A \times \C^3_B]$. Further deformations of IIB$[\R^4_A \times \C^3_B]$ lead to IIB$[\R^8_A \times \C_B]$. For more details, see \cite[Subsection 6.6]{CL16}.

	\item There is an equivalence between the wrapped Fukaya category $\mr{Fuk}_{\mc W}(T^*S^1)$ of $T^*S^1$ and the category $\coh(\C^\times)$ of coherent sheaves on $\C^\times$ \cite{AAEKO}. This equivalence induces a T-duality between IIB$[(\C^\times \times \C^4)_B]$ and IIA$[(T^*S^1)_A \times \C^4_B]$. Under this duality, open-string field theories on $\C^\times \times \C^k$ and $\pt \times \C^k$ are mapped to those on $T^*_{\pt} S^1 \times \C^k$ and $S^1 \times \C^k$, respectively. For more on T-duality, see \cite[Subsection 4.2]{RY1}.
\end{itemize}
By combining these two points, we conclude that any string background in Table \ref{table1-new} can be derived from IIB$[\C^5_B]$. Consequently, the corresponding supersymmetric Yang--Mills theories can, in principle, all be realized starting from IIB$[\C^5_B]$, as expected.

\subsubsection{}

We have considered two types of deformations:
\begin{itemize}
	\item A linear superpotential, which deforms the open-string field theory via the differential $\frac{\del}{\del \eps}$, and
	\item A Poisson bivector field $\Pi$, which turns the holomorphic dependence of the spacetime of a open-string field theory into a topological one.
\end{itemize}
These deformations precisely correspond to residual supersymmetries of the type IIB supersymmetry algebra in 10 dimensions (see \cite[Subsection 6.6]{CL16}).

According to \cite[Subsections 5.1, 5.2]{RY1}, the geometric Langlands correspondence arises from twisted S-duality between these two deformations. Specifically, consider the open-string field theory on $\R^2 \times \C$ in IIB[$\R^4_A \times \C^3_B$], corresponding to the $d=4$, $\mc N=4$ holomorphic-topological twist. Twisted S-duality provides a correspondence between the deformation by $\frac{\del}{\del \eps}$ and the deformation by $\Pi$, leading to a conjectural equivalence between the categories of boundary conditions for these theories. This equivalence is precisely the geometric Langlands correspondence. Furthermore, \cite{EY3} argues that the quantum geometric Langlands correspondence also arises from this twisted S-duality. This perspective is conceptually simpler than the original proposal by \cite{KW} (see \cite[Remark 1.11]{EY3}).

In fact, there are infinitely many closed string fields that can induce further deformations of open-string field theory. For example, \cite[Subsections 5.3, 5.4]{RY1} discusses superconformal deformations and Omega backgrounds, and realizes $d=4$ super Chern--Simons theory as a further deformation of the $d=4$, $\mc N=4$ holomorphic-topological twist. Exploring these and other deformations within the framework of topological strings offers promising directions for future research.

\subsubsection{}

Topological strings have a rich literature --- both in their own right and in relation to physical string theory. In this article, we have presented a simple framework for engineering various field theories of interest, many of which, to our knowledge, are new to the literature.

Although this approach omits much of the rich structure of physical string theory, it opens new opportunities: viewing topological strings as twists of the full theory can reveal novel dualities and correspondences, as illustrated by the case of $d=4$, $\mc N=4$ holomorphic-topoligcal twist. 

This suggests that our approach is not merely a ``poor man's version'' of string theory but instead highlights structures particularly relevant to certain areas of mathematics. We hope to explore these and other ideas further in future work.


\end{document}